\newtheorem{theorem}{Theorem}[section]
\newtheorem{lemma}{Lemma}[section]
\newtheorem{definition}{Definition}[section]
\newtheorem{corollary}{Corollary}[section]
\newtheorem{proposition}{Proposition}[section]
\newtheorem{example}{Example}[section]
\newtheorem{remark}{Remark}[section]
\newtheorem{conjecture}{Conjecture}[section]
\begin{document}
	
\begin{frontmatter}
\title{On Iso-Dual MDS Codes From Elliptic Curves}
\author{Yunlong Zhu}
\affiliation{
            organization={Department of Mathematics, School of Mathematics},
            addressline={Sun Yat-sen University}, 
            city={Guangzhou},
            postcode={510275}, 
            state={Guangdong},
            country={China}}
\author{Chang-An Zhao}
\affiliation{
            organization={School of Mathematics, Sun Yat-sen University},
            addressline={Sun Yat-sen University}, 
            city={Guangzhou},
            postcode={510275}, 
            state={Guangdong},
            country={China}}
\affiliation{
            organization={Guangdong Key Laboratory of Information Security},
            addressline={}, 
            city={Guangzhou},
            postcode={510006}, 
            state={Guangdong},
            country={China}}
\date{\today}

\begin{abstract}
For a linear code $C$ over a finite field, if its dual code $C^{\perp}$ is equivalent to itself, then the code $C$ is said to be {\it isometry-dual}. In this paper, we first confirm a conjecture about the isometry-dual MDS elliptic codes proposed by Han and Ren. Subsequently, two constructions of isometry-dual maximum distance separable (MDS) codes from elliptic curves are presented. The new code length $n$ satisfies $n\le\frac{q+\lfloor2\sqrt{q}\rfloor-1}{2}$ when $q$ is even and $n\le\frac{q+\lfloor2\sqrt{q}\rfloor-3}{2}$ when $q$ is odd. Additionally, we consider the hull dimension of both constructions. In the case of finite fields with even characteristics, an isometry-dual MDS code is equivalent to a self-dual MDS code and a linear complementary dual MDS code. Finally, we apply our results to entanglement-assisted quantum error correcting codes (EAQECCs) and obtain two new families of MDS EAQECCs.
\end{abstract}
\begin{keyword}
MDS codes \sep elliptic curves \sep algebraic geometry codes \sep iso-dual \sep EAQECCs.
\end{keyword}

\end{frontmatter}

\section{Introduction}
Let $C$ be an $[n,k]$ linear code defined over $\mathbb{F}_q$. Then it can be seen as a subspace of $\mathbb{F}_q^n$ with dimension $k$. The minimum distance $d$ of $C$ satisfies the Singleton bound, that is
\[
	d\le n-k+1.
\]
When the equality holds, we say that $C$ is an MDS code. The study of MDS codes is one of the central topics in coding theory. For more details see \cite{ttecc,xingcoding}. It is well-known that algebraic geometry codes (AG codes for short) constitute a famous family of linear codes introduced by Goppa in \cite{Goppa}, and the construction of non-Reed-Solomon MDS codes from algebraic curves has attracted significant attention \cite{chenMunuera, Chen-mds-elliptic,Han-tight,Han-maximal,munuera1992main,Munuera1993OnME}. Among AG MDS codes, those derived from elliptic curves have attracted considerable attention, as the rational places of an elliptic function field naturally form an abelian group. Munuera \cite{munuera1992main} showed that the length $n$ of an $[n,k]$ MDS elliptic code should satisfy
\[
	n\le \frac{q+1}{2}+\sqrt{q}+k.
\]
Then Li {\it et al.} \cite{Li} showed that
\[
	n\le (\frac{2}{3}+\epsilon)q
\]
for $\epsilon>0$, $q>\frac{4}{\epsilon^2}$ and $k>C_{\epsilon}ln(q)$, where $C_{\epsilon}$ is a positive constant. Moreover, Li {\it et al.} \cite{Li} proposed a conjecture that for any $\epsilon>0$, there exists a $C_{\epsilon}$ such that for any $q>C_{\epsilon}$, the length $n$ satisfying
\[
	n\le (\frac{1}{2}+\epsilon)q.
\]
More recently, Han and Ren \cite{Han-tight} have proved that 
\[
	n\le \frac{q+1}{2}+\sqrt{q}
\]
when $3\le k\le\frac{q+1-2\sqrt{q}}{10}$. Han and Ren also proposed a conjecture that the upper bound in the above is tight for sufficiently large $q$ and the dimension $k$ with $3\le k\le n-3$.

For two linear codes $C_1$ and $C_2$, we say that they are equivalent if there exists a vector ${\bf v}\in(\mathbb{F}^*_q)^n$ such that
\[
	C_2=\{(c_1,\ldots,c_n)|c_i = v_ix_i, {\bf x}=(x_1,\ldots,x_n)\in Per(C_1)\}
\]
where $Per(C_1)$ is the code obtained from $C_1$ by a coordinate permutation. If $C$ is equivalent to $C^{\perp}$, then $C$ is called {\it isometry-dual} (iso-dual for short). For AG codes, coordinate permutation corresponds to reordering evaluation points, thus we will consider the equivalence of linear codes in this paper without permutation. The iso-dual AG codes were first studied in full generality in \cite{Stich-iso}, and reference \cite{Stich-iso1} proves that the class of iso-dual codes attains the Tsfasman-Vladut and Zink bound over a finite field of quadratic cardinality. Kim and Lee \cite{Kim} presented a construction of iso-dual codes of length $2n+2$ from iso-dual codes of length $2n$. Very recently, Chara {\it et al.} \cite{Chara} gave constructions of iso-dual AG codes from lifting iso-dual codes in rational function fields to elementary abelian $p$-extensions. Bras {\it et al.} also considered the iso-dual in flags of AG codes \cite{Brasdcc,Brastit,Brassiam}.

The main motivation to consider iso-dual MDS codes is the hull-variation problem from MDS linear codes \cite{Chen-hullVaria}. The $hull$ of $C$ is the intersection of $C$ and $C^{\perp}$, which is denoted by $Hull(C)$. If $Hull(C)=C$, then we call $C$ self-dual or self-orthogonal. It is clear that ``self-dual" is a particular case of ``iso-dual", as also in \cite{Sok-iso}. In contrast, if $Hull(C)=\{0\}$, then $C$ is said to be LCD. The hull of linear codes has been widely studied in the literature since it plays a key role in determining whether two given linear codes are equivalent or not \cite{Leon,Leon-per,Sendrier}. It is also interesting in the construction of entanglement-assisted quantum error-correcting codes (EAQECCs, introduced in \cite{Brun}). For further reading about EAQECCs, see the following references \cite{ChenB,Chen-hullVaria,Gao,Jin-self-or,Pereira,Sok-mds,Sok-hull}. 

{\it Our Contributions:} In this paper, we will mainly focus on iso-dual MDS codes from elliptic curves. We first prove that Conjecture 3.7 in \cite{Han-maximal} holds for iso-dual MDS codes. Then we give two constructions of iso-dual MDS elliptic codes over finite fields with even and odd characteristics, respectively. As demonstrated in \cite{Jin-self-dual,Sok-self-dual}, self-dual codes can be derived from iso-dual codes over finite fields with even characteristics. Furthermore, the techniques in \cite{Chen-hull,Mesnager} enable the construction of LCD codes from these self-dual codes. It means that in the case of $\mathbb{F}_q$ with $q$ even, an iso-dual MDS code is both equivalent to a self-dual MDS codes and an LCD MDS codes. We will show more details in Remark IV.1 of Section IV. In contrast, for finite fields with odd characteristics, it is still hard to construct long self-dual MDS codes, and we will give some examples for considerations.

The paper is organized as follows. In Section II, we recall some definitions and results for our purpose, including AG codes, elliptic curves and MDS elliptic codes. Then in Section III, we prove the maximal length of an iso-dual MDS code is less than $\frac{q+1}{2}+\sqrt{q}$. In the following two sections, we give two constructions of iso-dual MDS codes over $\mathbb{F}_q$ with $q$ even in Section IV, and $q$ odd in Section V. At the end of this paper, we apply these iso-dual codes to EAQECCs and obtain two families of MDS EAQECCs.

\section{Preliminaries}
In this section, we recall some basic definitions and useful results for our purpose, which mainly refer to \cite{Silverman,Stich}. The EAQECCs part will be introduced in Section VI.
\subsection{Algebraic geometry code}
Suppose that $X$ is an absolutely irreducible, smooth algebraic curve defined over $\mathbb{F}_q$ of genus $g(X)$. Denote by $\mathbb{F}_q(X)$ the function field generated by $X$. A divisor is defined by the formal sum of places
\[
	G=\sum\limits_Pn_{P}P
\]
where $n_{P}\in\mathbb{Z}$ for the place $P$. If $n_{P}\ge 0$ for all places, then $G$ is called an effective divisor. Let $P_1,\ldots,P_n$ be $n$ pairwise distinct rational places of $\mathbb{F}_q(X)$ and put $D:=P_1+\cdots+P_n$. We also suppose that $G$ is a divisor of $\mathbb{F}_q(X)$ where $P_i\notin \text{\rm supp}(G)$ for all $i$. Then the AG code $C_L(D,G)$ associated with $D$ and $G$ is defined as:
\[
	C_L(D,G) := \{(f(P_1),\ldots,f(P_n)) : f\in\mathcal{L}(G)\}
\]
where
\[
	\mathcal{L}(G) := \{f\in \mathbb{F}_q(X)\backslash\{0\} : \text{\rm div}(f)+G\ge 0\} \cup \{0\}.
\]
It is clear that $\mathcal{L}(G)$ is a linear space over $\mathbb{F}_q$ with dimension $\ell(G)$. Therefore $C_L(D,G)$ is an $[n,k,d]$ linear subspace of $\mathbb{F}_q^n$ with parameters
\[
	k=\ell(G)-\ell(G-D), d\ge n-\text{\rm deg}(G),
\]
and $d^*=n-\text{\rm deg}(G)$ is called the {\it design distance} of $C_L(D,G)$. By the Riemann-Roch theorem, if $2g(X)-1\le \text{\rm deg}(G)<n$, then we can obtain
\[
	k= \text{\rm deg}(G)-g(X)+1.
\]

Another code associated with the divisors $D$ and $G$ is $C_{\Omega}(D,G)$. Let $\Omega$ be the module of rational differentials of $\mathcal{X}$, which is a vector space of dimension one over $\mathbb{F}_q$. For any $\omega\in\Omega$, let $(\omega)$ be the divisor determined by $\omega$. Suppose that $A$ is a divisor of $F(\mathcal{X})$. Then we can define the Weil differential space
\[
    \Omega(A):=\{\omega : \omega=0\ or\ (\omega)\ge A\}.
\]
Then we can define
\[
    C_{\Omega}(D,G):=\{(\omega_{P_1}(1),\ldots,\omega_{P_n}(1)) : \omega\in\Omega(G-D)\}.
\]
The code $C_{\Omega}(D,G)$ is also referred to as an $[n,k',d']$ AG code with parameters
\[
	k'= i(G-D)-i(G), d'\ge \text{\rm deg}(G)-2g(X)+2.
\]
Also by the Riemann-Roch theorem, if $2g(X)-1\le \text{\rm deg}(G)<n$, then we have
\[
	k'= n+g(X)-1-\text{\rm deg}(G).
\]

The relationship between $C_L(D,G)$ and $C_{\Omega}(D,G)$ is elaborated in the following proposition, which generalizes Proposition 2.2.10 and Proposition 8.1.2 in \cite{Stich}:
\begin{proposition}\label{dualAG}
Let $\eta$ be a Weil differential such that $v_{P_i}(\eta)=-1$ and $\eta_{P_i}(1)=1$ for $i=1,\ldots,n$. Then
\[
	C_L(D,G)^{\perp}=C_{\Omega}(D,G)=C_L(D,D-G+(\eta)).
\]
Moreover, if there exists an element $h\in\mathbb{F}_q(X)$ such that $v_{P_i}(h)=1$ for $i=1,\ldots,n$, then
\[
	C_L(D,G)^{\perp}=C_L(D,D-G+(dh)-(h))
\]
where $dh$ is the derivation of $h$.
\end{proposition}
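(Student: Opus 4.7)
The first equality $C_L(D,G)^{\perp}=C_{\Omega}(D,G)$ is the classical residue duality for AG codes (Proposition 2.2.10 in \cite{Stich}). I would invoke it via the residue formula $\sum_{P}\mathrm{res}_P(f\omega)=0$ on the function field $\mathbb{F}_q(X)$: for $f\in\mathcal{L}(G)$ and $\omega\in\Omega(G-D)$, the differential $f\omega$ satisfies $(f\omega)\geq -D$, hence has poles only at the $P_i$, each simple, and each of residue $f(P_i)\omega_{P_i}(1)$. The residue theorem then gives $\sum_{i}f(P_i)\omega_{P_i}(1)=0$, which is the orthogonality, and a Riemann--Roch dimension count matches $n-k$.

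To identify $C_{\Omega}(D,G)$ with $C_L(D,D-G+(\eta))$, I would fix $\eta$ as in the statement and use the fact that $\Omega$ is one-dimensional over $\mathbb{F}_q(X)$: every $\omega\in\Omega$ can be written uniquely as $\omega=f\eta$ with $f\in\mathbb{F}_q(X)$. The inclusion $\omega\in\Omega(G-D)$ translates to $(f)+(\eta)\geq G-D$, i.e.\ $f\in\mathcal{L}(D-G+(\eta))$. For the evaluation, $v_{P_i}(\eta)=-1$ means $f\eta$ has at worst a simple pole at $P_i$, and
\[
\omega_{P_i}(1)=\mathrm{res}_{P_i}(f\eta)=f(P_i)\,\mathrm{res}_{P_i}(\eta)=f(P_i)\eta_{P_i}(1)=f(P_i),
\]
so the map $f\mapsto(f(P_1),\ldots,f(P_n))$ identifies the two codes.

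For the \emph{moreover} part, I would take $\eta:=dh/h$ and verify directly that it meets the hypotheses on $\eta$. Since $v_{P_i}(h)=1$, the function $h$ is itself a uniformizer at each $P_i$, so $v_{P_i}(dh)=0$ and therefore $v_{P_i}(dh/h)=-1$. Writing $dh/h$ in the local uniformizer $t=h$ gives $dh/h = dt/t$, whose residue at $P_i$ equals $1$. Hence $\eta=dh/h$ satisfies $\eta_{P_i}(1)=1$, and $(\eta)=(dh)-(h)$, substituting into the first equality yields $C_L(D,G)^{\perp}=C_L(D,D-G+(dh)-(h))$.

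The substantive content is already packed into the first equality, which is standard; the only thing that requires care is the local normalization, namely that $\eta=dh/h$ really has simple poles with residue $1$ at each $P_i$. This is the tiny local computation above and I do not expect any serious obstacle beyond keeping the conventions straight.
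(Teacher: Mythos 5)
Your proposal is correct and follows exactly the route the paper intends: the paper gives no proof of its own, merely citing Propositions 2.2.10 and 8.1.2 of Stichtenoth, and your argument (residue duality plus a Riemann--Roch dimension count for the first equality, the one-dimensionality of $\Omega$ over $\mathbb{F}_q(X)$ for the identification with $C_L(D,D-G+(\eta))$, and the local computation showing $\eta=dh/h$ has simple poles of residue $1$ at each $P_i$) is precisely the standard proof of those cited results. The only point worth making explicit is that $h$ being a uniformizer at a rational place guarantees $h$ is a separating element, so $dh\neq 0$; otherwise everything checks out.
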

From Proposition 2.6 in \cite{Chara}, if $(\eta)$ is equivalent to $D-2G$, then $C_L(D,G)$ is an iso-dual code. In this paper, we will use the notation for the generalized AG code to determine the explicit vector.

For a given vector ${\bf v}=(v_1,\ldots,v_n)\in(\mathbb{F}_q^*)^n$, the {\it generalized AG} code with $D,G$ and ${\bf v}$ is defined by
\[
	C_L(D,G,{\bf v}) := \{(v_1f(P_1),\ldots,v_nf(P_n)) : f\in\mathcal{L}(G)\}
\]  
and
\[
	C_{\Omega}(D,G,{\bf v}):=\{(v_1\omega_{P_1}(1),\ldots,v_n\omega_{P_n}(1)) : \omega\in\Omega(G-D)\}.
\]
It immediately follows from the above definitions and Proposition \ref{dualAG} that
\[
	C_L(D,G,{\bf v})^{\perp}=C_{\Omega}(D,G,{\bf v^{-1}})
\]
where ${\bf v^{-1}}=(v_1^{-1},\ldots,v_n^{-1})$. Thus $C_L(D,G)$ is iso-dual with vector ${\bf v}$ means
\[
	C_L(D,G)^{\perp}=C_L(D,G,{\bf v}).
\]

\subsection{Elliptic curves}
Let $\mathcal{E}$ be an elliptic curve over $\mathbb{F}_q$ with Weierstrass equation:
\[
	\mathcal{E}: y^2+a_1xy+a_3y=x^3+a_2x^2+a_4x+a_6,
\]
and $a_i\in\mathbb{F}_q$. It is well known that the Jacobian of $\mathcal{E}$ is isomorphic to itself. Therefore the rational places in $\mathbb{F}_q(\mathcal{E})$ form an abelian group, and each rational place corresponds to a rational point. We denote $\mathcal{E}(\mathbb{F}_q)$ as the set of points on $\mathcal{E}$, and $\mathcal{O}$ as the point corresponding the place at infinity. Then $\mathcal{O}$ is also the zero element in the abelian group formed by $\mathcal{E}(\mathbb{F}_q)$. In this paper, we denote by $\oplus$ the addition in the group $\mathcal{E}(\mathbb{F}_q)$, and by $+$ the addition of places. We will not distinguish between a point $P$ and a rational place $P$. Moreover, we denote by $[n]P,nP$ the scalar multiplication of elements in the group and function field, respectively. 

The group structure for elliptic curves is given by the following two lemmas in \cite{Ruck}.
\begin{lemma}\cite[Theorem 1a]{Ruck}\label{T.2.1}
All the possible orders $|{\bf E}({\bf F}_q)|$ of an elliptic curve ${\bf E}$ defined over ${\bf F}_q$, where $q=p^n$ is a prime power,   are given by $$ |{\bf E}({\bf F}_q)|=1+q-\beta,$$
where $\beta$  is an integer with  $|\beta| \leq 2\sqrt{q}$  satisfying one of the following conditions:\\
(a) $\gcd(\beta,p)=1$;\\
(b) If $n$ is even: $\beta=\pm2\sqrt{q}$;\\
(c) If $n$ is even and $p \neq 1$ $mod$ $3$: $\beta=\pm\sqrt{q}$;\\
(d) If $n$ is odd and $p=2$ or $3$: $\beta=\pm p^{\frac{n+1}{2}}$;\\
(e) If either (i) $n$ is odd or (ii) $n$ is even, and $p\neq 1$ $mod$ $4$: $\beta=0$.
\end{lemma}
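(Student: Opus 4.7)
The plan is to exploit the $q$-power Frobenius endomorphism $\pi \in \mathrm{End}(\mathcal{E})$, which satisfies the characteristic equation $\pi^2 - \beta\pi + q = 0$, together with the classification of endomorphism algebras of elliptic curves. First I would dispose of the bound $|\beta| \le 2\sqrt{q}$ via Hasse's theorem, which is a consequence of the positivity of the Rosati involution applied to $\pi - 1$ (equivalently, the Cauchy-Schwarz inequality on the Tate module). This is the easy half; the substance of the lemma is the realizability of each integer $\beta$ in this range under the arithmetic restrictions (a)-(e).

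For the ordinary case (a), where $\gcd(\beta,p)=1$, the plan is to invoke the Deuring lifting theorem: $\mathrm{End}(\mathcal{E})\otimes\mathbb{Q}$ is an imaginary quadratic field $K=\mathbb{Q}(\sqrt{\beta^2-4q})$, and one can lift $\mathcal{E}$ together with $\pi$ to a CM elliptic curve in characteristic zero. Conversely, starting from any $\beta$ with $|\beta|\le 2\sqrt{q}$ and $\gcd(\beta,p)=1$, Honda-Tate theory applied to the Weil $q$-number $\tfrac{\beta+\sqrt{\beta^2-4q}}{2}$ produces an elliptic curve in characteristic $p$ realizing that trace, which gives case (a) in full.

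The supersingular cases (b)-(e), where $p\mid\beta$, form the main obstacle. Here $\pi$ is inseparable and $\mathrm{End}(\mathcal{E})\otimes\mathbb{Q}$ is a quaternion algebra ramified at $p$ and $\infty$. The strategy is to classify Weil $q$-numbers $\alpha$ of absolute value $\sqrt{q}$ with $p\mid\alpha+\bar\alpha$: either $\alpha$ generates an imaginary quadratic field (in which one of $\mathbb{Q}(\sqrt{-p})$, $\mathbb{Q}(\sqrt{-1})$, $\mathbb{Q}(\sqrt{-3})$ embeds) or $\alpha=\pm\sqrt{q}\in\mathbb{Q}$ (forcing $n$ even). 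The congruence conditions $p\not\equiv 1\pmod 3$ and $p\not\equiv 1\pmod 4$ arise from the splitting behaviour of $p$ in $\mathbb{Q}(\sqrt{-3})$ and $\mathbb{Q}(\sqrt{-1})$: the algebra must ramify at $p$ for a supersingular elliptic curve to exist. Case (d) with $p=2,3$ and $n$ odd corresponds to $\pi\bar\pi=q$ with $\pi=\pm p^{(n+1)/2}\cdot u$ for a suitable unit, and one verifies realizability using explicit supersingular models such as $y^2+y=x^3$ in characteristic $2$ or $y^2=x^3-x$ in characteristic $3$.

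I expect the case-by-case realizability — showing that each listed $\beta$ is actually attained by an explicit elliptic curve, rather than merely being arithmetically consistent — to be the delicate step. Ruling out forbidden values is comparatively clean because it follows from divisibility of $\beta^2-4q$ and the requirement that the Newton polygon of $T^2-\beta T+q$ be compatible with supersingular reduction; constructing curves with the exotic traces in (d) requires either Deuring's parametrisation of supersingular $j$-invariants or direct computation with twists of a fixed supersingular curve, and the bookkeeping across all five subcases is where the argument becomes longest.
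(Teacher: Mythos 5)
The paper offers no proof of this statement: it is quoted verbatim, with citation, from R\"uck's 1987 note, which in turn rests on Waterhouse's classification (itself built on Deuring's lifting theorem and the Honda--Tate correspondence). So there is no in-paper argument to compare yours against; what can be said is that your outline is the standard and essentially correct route to the result. Two places where the sketch would need tightening if written out in full. First, in case (a) it is not enough to say that Honda--Tate ``produces an elliptic curve'': Honda--Tate produces a simple abelian variety in the isogeny class of the Weil number $\frac{\beta+\sqrt{\beta^2-4q}}{2}$, and you must compute the local invariants of its endomorphism algebra at the places of $\mathbb{Q}(\pi)$ over $p$ to verify that the dimension is $1$; for $\gcd(\beta,p)=1$ the two places over $p$ have invariants $0$ and the algebra is the field itself, which is what forces dimension one. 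Second, the explicit models you name for case (d) do not actually realize case (d): $y^2+y=x^3$ over $\mathbb{F}_2$ has $3$ points and $y^2=x^3-x$ over $\mathbb{F}_3$ has $4$ points, so both have $\beta=0$ and belong to case (e); the traces $\beta=\pm p^{(n+1)/2}$ are realized instead by twists such as $y^2+y=x^3+x$ over $\mathbb{F}_2$ (five points, $\beta=-2$) or $y^2=x^3-x+1$ over $\mathbb{F}_3$ (seven points, $\beta=-3$). You do hedge with ``or direct computation with twists,'' so this is a slip in the illustration rather than in the logic, but in a complete write-up the realizability of every subcase must be pinned to a curve that actually attains the claimed trace. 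The exclusion direction (no other $\beta$ with $p\mid\beta$ occurs) via the splitting of $p$ in $\mathbb{Q}(\pi)$ and the embedding condition into the quaternion algebra $B_{p,\infty}$ is correctly identified.
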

\begin{lemma}\cite[Theorem 3]{Ruck}\label{T.2.2}
Let ${\bf E}$ be an elliptic curve over a finite field ${\bf F}_q$ with $q=p^n$ elements. Let $|{\bf E}({\bf F}_q)|= \prod_{l} l^{h_l}$ be the prime factorization. Then all the possible groups ${\bf E}({\bf F}_q)$ with the order $|{\bf E}({\bf F}_q)|$ are the following, 
\[
{\bf Z}/p^{h_p}{\bf Z} \times \prod_{l\neq p} ({\bf Z}/l^{a_l}{\bf Z}\times {\bf Z}/l^{h_l-a_l}{\bf Z}),
\]
with\\
(a) In case (b) of Theorem \ref{T.2.1}: Each $a_l$  is equal to $\frac{h_l}{2}$;\\
(b) In cases (a), (c), (d), (e) of Theorem \ref{T.2.1}: $a_l$  is an arbitrary integer satisfying
$0 \leq a_l \leq \min \{v_l(q-1), [\frac{h_l}{2}]\}$, where $v_l(q-1)$ is the order of prime factor $l$ in $q-1$.
\end{lemma}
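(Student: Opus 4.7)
The plan is to decompose $E(\mathbb{F}_q)$ into its $\ell$-primary components and analyze each separately. For each prime $\ell$, write $h_\ell = v_\ell(|E(\mathbb{F}_q)|)$; by the fundamental theorem for finite abelian groups, the $\ell$-primary part of $E(\mathbb{F}_q)$ is a direct product of cyclic groups of $\ell$-power order. I would first handle the generic primes $\ell \neq p$, then the characteristic prime $p$, and finally the exceptional case $\beta = \pm 2\sqrt{q}$ of (b).

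For $\ell \neq p$, the geometric $\ell^m$-torsion satisfies $E[\ell^m](\overline{\mathbb{F}_q}) \cong (\mathbb{Z}/\ell^m\mathbb{Z})^2$, so the $\ell$-primary part of $E(\mathbb{F}_q)$, being a subgroup, has at most two invariant factors. Writing it as $\mathbb{Z}/\ell^{a_\ell}\mathbb{Z} \times \mathbb{Z}/\ell^{h_\ell - a_\ell}\mathbb{Z}$ with $a_\ell \leq h_\ell - a_\ell$ already yields $a_\ell \leq \lfloor h_\ell/2 \rfloor$. For the second bound I would observe that the full torsion subgroup $E[\ell^{a_\ell}]$ is contained in $E(\mathbb{F}_q)$, since both invariant factors have exponent at least $\ell^{a_\ell}$; the non-degenerate, Galois-equivariant Weil pairing
\[
e_{\ell^{a_\ell}} : E[\ell^{a_\ell}] \times E[\ell^{a_\ell}] \longrightarrow \mu_{\ell^{a_\ell}}
\]
then forces $\mu_{\ell^{a_\ell}} \subseteq \mathbb{F}_q^*$, hence $a_\ell \leq v_\ell(q-1)$.

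For $\ell = p$ I would split into the ordinary and supersingular cases. If $E$ is ordinary, the geometric $p^m$-torsion is cyclic of order $p^m$, so the $p$-part of $E(\mathbb{F}_q)$ is cyclic of order $p^{h_p}$, matching the statement with $a_p = 0$; if $E$ is supersingular, $E[p^m](\overline{\mathbb{F}_q}) = 0$ and hence $h_p = 0$ trivially. In case (b) the condition $\beta = \pm 2\sqrt{q}$ forces $p \mid \beta$, so $E$ is supersingular, and the Frobenius polynomial factors as $(T \mp \sqrt{q})^2$. Since $\mathrm{End}_{\mathbb{F}_q}(E)$ embeds in a quaternion division algebra, the relation $(\pi \mp \sqrt{q})^2 = 0$ implies $\pi = \pm \sqrt{q}$ as an endomorphism, whence
\[
E(\mathbb{F}_q) = \ker(\pi - 1) = E[\sqrt{q} \mp 1] \cong (\mathbb{Z}/(\sqrt{q} \mp 1)\mathbb{Z})^2,
\]
forcing $a_\ell = h_\ell/2$ for every $\ell$, exactly as asserted.

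The main obstacle is the sufficiency direction: for every admissible tuple $(a_\ell)$ one must actually exhibit an elliptic curve over $\mathbb{F}_q$ realizing the corresponding group. My approach would be to start from any curve of the prescribed order (whose existence is guaranteed by Deuring's work underlying Lemma \ref{T.2.1}) and to modify the $\ell$-primary structure by passing through a carefully chosen chain of $\ell$-isogenies; tracking how the Weil pairing constraint evolves under these isogenies, and ensuring that no admissible structure is missed, is the delicate bookkeeping step that makes this direction substantially more involved than the necessity.
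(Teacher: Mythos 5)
The paper offers no proof of this lemma at all; it is quoted verbatim as Theorem 3 of R\"uck's note, so there is nothing in the paper to compare your argument against. Judged on its own merits, your necessity direction is correct and is the standard one: the rank-two structure of $E[\ell^m](\overline{\mathbb{F}}_q)$ for $\ell\neq p$ gives $a_\ell\le\lfloor h_\ell/2\rfloor$, the Galois-equivariant surjective Weil pairing on the rational subgroup $E[\ell^{a_\ell}]$ gives $\ell^{a_\ell}\mid q-1$, the ordinary/supersingular dichotomy handles $\ell=p$, and the absence of zero divisors in the endomorphism algebra correctly forces $\pi=\pm\sqrt{q}$ and hence $E(\mathbb{F}_q)\cong(\mathbb{Z}/(\sqrt{q}\mp1)\mathbb{Z})^2$ in case (b).

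The genuine gap is the sufficiency direction, which you explicitly defer: the theorem asserts that \emph{every} tuple $(a_\ell)$ satisfying the stated inequalities is realized by some curve of the given order, and that claim is where essentially all the work in R\"uck's proof lives. Your proposed isogeny-chain modification is not carried out, and as stated it is not clear that it reaches every admissible structure; one must control how the $\ell$-adic valuation of the conductor of $\mathbb{Z}[\pi]$ constrains what is reachable. The standard route, and the one R\"uck takes, is Deuring's theorem -- every order $\mathcal{O}$ with $\mathbb{Z}[\pi]\subseteq\mathcal{O}\subseteq\mathcal{O}_K$ occurs as $\mathrm{End}(E)$ for some $E$ in the isogeny class -- combined with the $\mathcal{O}$-module isomorphism $E(\mathbb{F}_q)\cong\mathcal{O}/(\pi-1)\mathcal{O}$ for ordinary curves; computing the abelian-group structure of $\mathcal{O}/(\pi-1)\mathcal{O}$ as the conductor of $\mathcal{O}$ varies sweeps out exactly the range $0\le a_\ell\le\min\{v_\ell(q-1),\lfloor h_\ell/2\rfloor\}$. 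Without this step (or a completed version of your isogeny argument) your proof establishes only that the listed groups are the only candidates, not that they all occur.
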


Considering a divisor $D=\sum_{i=1}^n a_iP_i$ where $P_i$ are rational places, and define
\begin{align*}
	{\rm deg}(D)&=\sum_{i=1}^n a_i,\\
	{\rm sum}(D)&=[a_1]P_1\oplus\cdots\oplus[a_n]P_n.
\end{align*}
Then the important Abel-Jacobi theorem for elliptic curves is given as follows, which serves as the foundation of this paper.
\begin{lemma}\cite[Theorem 11.2]{Washington}\label{abel-jacobi}
Let $D=\sum_{i=1}^n a_iP_i$ where $P_i$ are all rational places be a divisor with ${\rm deg}(D)=0$. Then $D$ is a principal divisor, i.e., there exists a function $f$ such that
\[
	(f)=D
\]
if and only if ${\rm sum}(D)=\mathcal{O}$.
\end{lemma}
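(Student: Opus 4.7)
The plan is to realize the standard isomorphism between $\mathrm{Pic}^0(\mathcal{E})$ and the group $\mathcal{E}(\mathbb{F}_q)$ and read the claim off from this identification. Since $D$ has degree $0$, $D$ is principal exactly when its class $[D]\in\mathrm{Pic}^0(\mathcal{E})$ is trivial, so the task reduces to showing that $[D]=0$ in $\mathrm{Pic}^0(\mathcal{E})$ if and only if $\mathrm{sum}(D)=\mathcal{O}$.

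First I would construct a map $\phi:\mathrm{Pic}^0(\mathcal{E})\to\mathcal{E}(\mathbb{F}_q)$ using Riemann-Roch. For any degree-zero divisor $A$, the divisor $A+\mathcal{O}$ has degree $1>2g(\mathcal{E})-2=0$, so $\ell(A+\mathcal{O})=1$. Hence there is a nonzero $f\in\mathcal{L}(A+\mathcal{O})$, unique up to a scalar, and the effective divisor $(f)+A+\mathcal{O}$ has degree $1$, so it is a single rational point; define $\phi([A])$ to be that point. Specializing to $A=P-\mathcal{O}$ (taking $f=1$) gives $\phi([P-\mathcal{O}])=P$, so $\phi$ is a bijection with inverse $P\mapsto[P-\mathcal{O}]$, and in particular it is well-defined on classes.

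The main work is to verify that $\phi$ is a group homomorphism when $\mathcal{E}(\mathbb{F}_q)$ carries the chord-and-tangent law. The key input is the classical fact that if $P_1,P_2,P_3$ are the (not necessarily distinct) intersection points of a line with the Weierstrass model, then the ratio of that line to the line at infinity is a rational function with divisor $P_1+P_2+P_3-3\mathcal{O}$, so $P_1+P_2+P_3\sim 3\mathcal{O}$. Combining this with the geometric construction of $\oplus$, which sends collinear triples to the identity, yields the identity $[P\oplus Q-\mathcal{O}]=[P-\mathcal{O}]+[Q-\mathcal{O}]$ in $\mathrm{Pic}^0(\mathcal{E})$, which is exactly the homomorphism property; bilinearity in $[P-\mathcal{O}],[Q-\mathcal{O}]$ then extends this to arbitrary integer combinations.

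Applying the homomorphism to $D=\sum_{i=1}^n a_iP_i$ and using $\phi^{-1}([P_i])=[P_i-\mathcal{O}]$ together with $\deg(D)=0$ yields $\phi^{-1}([D])=\bigoplus_{i=1}^n [a_i]P_i=\mathrm{sum}(D)$, so $[D]=0$ if and only if $\mathrm{sum}(D)=\mathcal{O}$, which is the claim. The main obstacle I anticipate is precisely the verification that the algebraic group law on $\mathrm{Pic}^0(\mathcal{E})$ coincides with the geometric chord-and-tangent law on $\mathcal{E}(\mathbb{F}_q)$: once the collinearity-gives-principal-divisor identity is in hand and one fixes a consistent convention for the identity $\mathcal{O}$ at infinity, the remaining arguments reduce to routine bookkeeping with divisors.
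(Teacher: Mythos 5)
The paper gives no proof of this lemma at all --- it is quoted directly as Theorem 11.2 of Washington's book --- and your sketch reproduces exactly the standard argument used there and in Silverman: identify $\mathrm{Pic}^0(\mathcal{E})$ with $\mathcal{E}(\mathbb{F}_q)$ via the Riemann--Roch computation $\ell(A+\mathcal{O})=1$, verify the homomorphism property from the collinearity relation $P_1+P_2+P_3\sim 3\mathcal{O}$, and use $\deg(D)=0$ to write $[D]=\sum_i a_i[P_i-\mathcal{O}]$, which maps to $\mathrm{sum}(D)$. The argument is correct; only cosmetic slips remain (you write $\phi^{-1}([P_i])$ and $\phi^{-1}([D])$ where you mean $\phi^{-1}(P_i)$ and $\phi([D])$, and ``bilinearity'' should be additivity of the homomorphism extended $\mathbb{Z}$-linearly).
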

Let us present a useful lemma in the following.
\begin{lemma}\label{dx}
The divisor of the differential $dx$ of $\overline{\mathbb{F}}_q(\mathcal{E})$ is
\begin{itemize}
\item if $q$ odd, then $(dx)=(y)$,\\
\item if $q$ even and $\mathcal{E}$ has the form $y^2+a_3y=x^3+a_4x+a_6$, then $(dx)=0$,\\
\item if $q$ even and $\mathcal{E}$ has the form $y^2+xy=x^3+a_2x^2+a_6$, then $(dx)=(x)$.
\end{itemize}
\end{lemma}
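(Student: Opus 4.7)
The plan is to reduce all three statements to the standard fact that on an elliptic curve the \emph{invariant differential}
\[
\omega \;=\; \frac{dx}{2y + a_{1}x + a_{3}}
\]
is regular and nonvanishing at every place of $\overline{\mathbb{F}}_{q}(\mathcal{E})$, including the point at infinity $\mathcal{O}$. This is Proposition III.1.5 in Silverman, and it immediately gives $(\omega) = 0$, hence
\[
(dx) \;=\; (2y + a_{1}x + a_{3}).
\]
After this reduction the lemma becomes a one-line case check: one reads off $a_{1}$ and $a_{3}$ from the given normal form and interprets $2y + a_{1}x + a_{3}$ in the relevant characteristic.

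For $q$ odd I may assume $a_{1} = a_{3} = 0$ after completing the square, in which case $2y + a_{1}x + a_{3} = 2y$ and $2 \in \mathbb{F}_{q}^{*}$, giving $(dx) = (y)$. For $q$ even with the form $y^{2} + a_{3}y = x^{3} + a_{4}x + a_{6}$ we have $a_{1} = 0$ and $2 = 0$, so $2y + a_{1}x + a_{3}$ collapses to the constant $a_{3}$; nonsingularity of $\mathcal{E}$ forces $a_{3} \in \mathbb{F}_{q}^{*}$, hence $(dx) = (a_{3}) = 0$. For $q$ even with the form $y^{2} + xy = x^{3} + a_{2}x^{2} + a_{6}$ we have $a_{1} = 1$, $a_{3} = 0$, so $2y + a_{1}x + a_{3} = x$, giving $(dx) = (x)$.

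I do not expect a real obstacle: once the invariant-differential fact is granted, everything is substitution. The only subtlety worth flagging is that restricting to exactly these two normal forms in characteristic $2$ is justified by the standard split of the isomorphism classes of elliptic curves according to $j$-invariant ($j = 0$ versus $j \neq 0$); I would cite this from Silverman rather than reprove it. If one preferred a self-contained proof avoiding the invariance of $\omega$, one could instead differentiate the Weierstrass polynomial $F(x,y) = 0$ to obtain $F_{x}\,dx + F_{y}\,dy = 0$ and carry out a local-parameter analysis at the affine places and at $\mathcal{O}$; this is longer but yields the same three divisors.
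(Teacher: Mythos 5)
Your proof is correct, but it takes a different route from the paper. The paper disposes of the lemma in one line by citing Stichtenoth: Theorem 4.3.2 there gives $(dx) = -2(x)_\infty + \operatorname{Diff}(F/K(x))$ for the degree-two extension $F = \mathbb{F}_q(\mathcal{E})$ over $\mathbb{F}_q(x)$, and Proposition 6.1.3 supplies the normal forms together with the ramification data needed to evaluate the different in each case (tame ramification at the roots of the cubic and at $\mathcal{O}$ when $q$ is odd; the Artin--Schreier-type analysis when $q$ is even). You instead invoke Silverman's Proposition III.1.5, that the invariant differential $\omega = dx/(2y + a_1x + a_3)$ has divisor zero, which reduces everything to $(dx) = (2y + a_1x + a_3)$ and a substitution of coefficients; your case check is accurate, including the observations that nonsingularity forces $a_3 \neq 0$ in the supersingular form and that the odd-characteristic statement presupposes the completed-square model (which is indeed the form used in the paper's later examples). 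Your route is arguably more transparent and self-contained for a reader coming from the elliptic-curves literature, since it avoids unwinding the different; the paper's route stays entirely within the function-field framework of Stichtenoth that the rest of the paper relies on, and generalizes more readily to higher-degree covers. Either citation is an acceptable black box, and the residual computation is equally short in both cases.
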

\begin{proof}
These assertions follow easily from Theorem 4.3.2 and Proposition 6.1.3 in \cite{Stich}.
\end{proof}
Note that $(dx)=0$ means there is no place ramified on $\mathbb{F}_q(\mathcal{E})$ without $\mathcal{O}$. Therefore the group $\mathcal{E}(\mathbb{F}_q)$ has no element with order 2, i.e., $\#\mathcal{E}(\mathbb{F}_q)$ must be odd. In this paper, we mainly consider the case of $(dx)=(x)$ over finite fields with even characteristics.

\subsection{MDS codes from elliptic curves}
Suppose that $P_1,\ldots,P_n$ are $n$ different points of $\mathcal{E}(\mathbb{F}_q)$ and $G$ is an effective divisor such that $P_i\notin \text{\rm supp}(G)$ for all $i$. Let $D=P_1+\cdots+P_n$ be the divisor defined by the $n$ points, then we have $C_L(D,G)$ is an $[n,\text{\rm deg}(G),\ge n-\text{\rm deg}(G)]$ code since $g(\mathcal{E})=1$. It is well known that $C_L(D,G)$ and $C_{\Omega}(D,G)$ are both NMDS or MDS. Then the following lemma gives the condition to check whether $C_L(D,G)$ is an MDS code.
\begin{lemma}\label{mds-condition}\cite{Chen-mds-elliptic}
Suppose that $G$ can be represented by $G=Q_1+\cdots+Q_k$ where $Q_1,\ldots,Q_k$ are $k$ points and distinct from $P_i$ for all $i$. Then $C_L(D,G)$ is MDS if and only if
\[
	P_{i_1}\oplus\cdots\oplus P_{i_k}\neq Q_1\oplus\cdots\oplus Q_k
\]
for any $k$ points $P_{i_1},\ldots,P_{i_k}$ in $\text{\rm supp}(D)$.
\end{lemma}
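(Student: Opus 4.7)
The plan is to combine the usual AG bound with Abel--Jacobi on the elliptic curve to translate the failure of MDS into a group-theoretic equality in $\mathcal{E}(\mathbb{F}_q)$. Since $g(\mathcal{E})=1$ and $G$ is effective of degree $k\ge 1$, the Riemann--Roch formulation from Section~II gives $\dim C_L(D,G)=\deg(G)=k$, and the design distance bound yields $d\ge n-k$. Hence $C_L(D,G)$ is MDS precisely when no nonzero codeword has weight $\le n-k$.

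First I would reformulate the existence of such a low-weight codeword. A nonzero codeword of weight $\le n-k$ corresponds to a nonzero function $f\in\mathcal{L}(G)$ vanishing at some $k$ evaluation points $P_{i_1},\dots,P_{i_k}\in\operatorname{supp}(D)$. Equivalently, with $E := G - P_{i_1} - \cdots - P_{i_k}$, there exists a nonzero $f\in\mathcal{L}(E)$. Since $\deg(E)=0$ and $g(\mathcal{E})=1$, a standard consequence of Riemann--Roch is that
\[
\ell(E)=\begin{cases}1,&E\text{ is principal},\\0,&\text{otherwise}.\end{cases}
\]
So the low-weight codeword exists if and only if $E$ is a principal divisor.

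Next I would apply Abel--Jacobi (Lemma~\ref{abel-jacobi}) to the degree-zero divisor $E$: principality is equivalent to $\operatorname{sum}(E)=\mathcal{O}$. Using the hypothesis $G=Q_1+\cdots+Q_k$ and additivity of $\operatorname{sum}$, this becomes
\[
(Q_1\oplus\cdots\oplus Q_k)\ominus(P_{i_1}\oplus\cdots\oplus P_{i_k})=\mathcal{O},
\]
that is, $P_{i_1}\oplus\cdots\oplus P_{i_k}=Q_1\oplus\cdots\oplus Q_k$. Therefore the code admits a codeword of weight $\le n-k$ iff some $k$-subset of $\operatorname{supp}(D)$ satisfies that equation, and the lemma follows by contraposition.

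There is no real obstacle here beyond being careful that the ``nonzero $f\in\mathcal{L}(E)$'' really yields a codeword of $C_L(D,G)$ of the claimed weight (which is immediate from $\mathcal{L}(E)\subseteq\mathcal{L}(G)$), and that the dichotomy $\ell(E)\in\{0,1\}$ for degree-zero divisors on a genus-$1$ curve is used in both directions. The genuine content is entirely the Abel--Jacobi translation, which converts a subtle function-field existence problem into a concrete group equation in $\mathcal{E}(\mathbb{F}_q)$.
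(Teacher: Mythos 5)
Your argument is correct: the paper itself gives no proof of this lemma (it is quoted from \cite{Chen-mds-elliptic}), and your Riemann--Roch plus Abel--Jacobi translation is exactly the standard argument behind it, including the key dichotomy $\ell(E)\in\{0,1\}$ for degree-zero divisors on a genus-one curve. The only point worth making fully explicit is that in the converse direction the nonzero $f\in\mathcal{L}(E)$ has $(f)=P_{i_1}+\cdots+P_{i_k}-G$ exactly (degree-zero divisor with $(f)+E\ge 0$ forces equality), so since the $Q_j$ are distinct from the $P_i$ the resulting codeword has weight exactly $n-k$ and is in particular nonzero.
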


Now we present a construction of MDS elliptic codes in the following lemma, which also can be seen as a generalization of the construction given in \cite{Chen-mds-elliptic}.
\begin{lemma}\label{mds_const}
Suppose that $Q_1,\ldots,Q_s$ are $s$ distinct points such that the order of $Q_j$ is prime to the order of $P_i$ for all $1\le i\le n$ and $1\le j\le s$. Denote by
\[
	\mathcal{Q}_j=\{Q_j\oplus P_i:1\le i\le n\}.
\]
Let $D=\sum\limits_{j=1}^n\sum\limits_{P\in\mathcal{Q}_j}P$, and $G=k\mathcal{O}$. If 
\[
	[k_1]Q_1\oplus\cdots\oplus [k_s]Q_s\neq\mathcal{O}
\]
for all
\[
	k_1+\cdots+k_s=k.
\]
Then the code $C_L(D,G)$ is an $[ns,k]$ MDS code.
\end{lemma}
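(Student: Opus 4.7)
The plan is to reduce the MDS property to a group-theoretic non-vanishing statement via Abel--Jacobi and then exploit the coprimality of orders to separate the $P$- and $Q$-contributions. Since $g(\mathcal{E}) = 1$ and $G = k\mathcal{O}$ has $\mathrm{sum}(G) = \mathcal{O}$, Lemma \ref{abel-jacobi} implies that $C_L(D,G)$ is MDS if and only if for every sub-divisor $D' = R_1 + \cdots + R_k \le D$ of degree $k$ one has $R_1 \oplus \cdots \oplus R_k \neq \mathcal{O}$ in $\mathcal{E}(\mathbb{F}_q)$. This is the natural analog of Lemma \ref{mds-condition} for the (non-reduced) divisor $k\mathcal{O}$, and it is what I will verify.

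First I would check that the $\mathcal{Q}_j$ are pairwise disjoint, so that $\mathrm{supp}(D)$ genuinely contains $ns$ points. Suppose $Q_j \oplus P_i = Q_{j'} \oplus P_{i'}$; then $Q_j \ominus Q_{j'} = P_{i'} \ominus P_i$. The order of the left-hand side divides $\mathrm{lcm}(\mathrm{ord}(Q_j),\mathrm{ord}(Q_{j'}))$, while the order of the right-hand side divides $\mathrm{lcm}(\mathrm{ord}(P_i),\mathrm{ord}(P_{i'}))$. By the coprimality hypothesis these two integers are coprime, so both sides must equal $\mathcal{O}$, forcing $(j,i) = (j',i')$.

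Now pick any $k$ distinct points $R_1,\dots,R_k \in \mathrm{supp}(D)$. Write $R_\ell = Q_{j_\ell} \oplus P_{i_\ell}$ and set $k_j := \#\{\ell : j_\ell = j\}$, so that $k_1 + \cdots + k_s = k$. Then
\[
R_1 \oplus \cdots \oplus R_k \;=\; \Bigl(\bigoplus_{j=1}^{s}[k_j]Q_j\Bigr) \oplus \Bigl(\bigoplus_{\ell=1}^{k} P_{i_\ell}\Bigr).
\]
Assume for contradiction that this is $\mathcal{O}$. Let $N := \mathrm{lcm}_i\,\mathrm{ord}(P_i)$ and $M := \mathrm{lcm}_j\,\mathrm{ord}(Q_j)$; these are coprime by hypothesis. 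Scaling the displayed identity by $[N]$ annihilates the $P$-term and yields $\bigoplus_j [Nk_j]Q_j = \mathcal{O}$. Choosing $N'$ with $NN' \equiv 1 \pmod{M}$ and scaling once more produces $\bigoplus_j [k_j]Q_j = \mathcal{O}$, directly contradicting the standing hypothesis on the $Q_j$.

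The only mildly delicate point is that the coprimality argument must be deployed twice: once to confirm that the $ns$ coset-translates are genuinely distinct, and once to decouple the $Q$-summand from the $P$-summand inside $\mathcal{E}(\mathbb{F}_q)$ so that the assumption on $[k_1]Q_1\oplus\cdots\oplus[k_s]Q_s$ alone is enough. Beyond that, the argument is routine bookkeeping: re-indexing the chosen $k$-subset by its $\mathcal{Q}_j$-profile $(k_1,\dots,k_s)$ and invoking Abel--Jacobi for $g=1$.
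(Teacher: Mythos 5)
Your proof is correct and follows essentially the same route as the paper: partition the chosen $k$ points by their $\mathcal{Q}_j$-membership, write the group sum as a $Q$-part plus a $P$-part, and use coprimality of orders to decouple them and reduce to the hypothesis on $[k_1]Q_1\oplus\cdots\oplus[k_s]Q_s$. Your explicit verification that the sets $\mathcal{Q}_j$ are pairwise disjoint (so that $D$ really consists of $ns$ distinct places) is a small detail the paper leaves implicit, and is a worthwhile addition.
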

\begin{proof}
For any $k$ points in $\text{\rm supp}(D)$, they can be partitioned into $s$ parts as subsets of $\mathcal{Q}_j$. Therefore the sum of these points in group has the form as
\[
	[k_1]Q_1\oplus\cdots\oplus [k_s]Q_s\oplus P'
\]
where $k_1+\cdots+k_s=k$. Since the order of $Q_j$ is prime to $P_i$, we must have the order of $Q_j$ is also prime to $P'$. If
\[
	[k_1]Q_1\oplus\cdots\oplus [k_s]Q_s\neq\mathcal{O},
\]
then the sum of these $k$ points must not be equal to $\mathcal{O}$. By Lemma \ref{mds-condition}, the code $C_L(D,G)$ is MDS.
\end{proof}
The following result is an immediate corollary of the lemma in the above.
\begin{corollary}\label{mdsmul}
With the notations in Lemma \ref{mds_const}. Suppose that there exists $t$ points $Q'_{1},\ldots,Q'_{t}$ such that $Q'_j\notin\text{\rm supp}(D)$ and the order $q_j$ of $Q'_j$ is prime to $P_i$ for all $1\le i\le n$. Let
\[
	\tilde{G}=m_{1}Q'_{1}+\cdots+m_{t}Q'_{t}+(k-m_{1}-\cdots-m_{t})\mathcal{O}.
\]
If 
\[
	[k_1]Q_1\oplus\cdots\oplus [k_s]Q_s\oplus [q_{1}-m_{1}]Q'_1\oplus\cdots\oplus [q_{t}-m_{t}]Q'_t\neq\mathcal{O}
\]
for all
\[
	k_1+\cdots+k_{s}=k,
\]
then the code $C_L(D,\tilde{G})$ is an $[ns,k]$ MDS code.
\end{corollary}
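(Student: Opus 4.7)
The plan is to mimic the proof of Lemma~\ref{mds_const}, working with the more general divisor $\tilde{G}$ and invoking Lemma~\ref{mds-condition}. After expanding multiplicities, $\tilde{G}$ is already displayed as a formal sum of $k$ points, namely each $Q'_j$ with multiplicity $m_j$ and $\mathcal{O}$ with multiplicity $k-m_1-\cdots-m_t$, all of which lie outside $\text{\rm supp}(D)$ by hypothesis. Thus Lemma~\ref{mds-condition} applies directly, and $C_L(D,\tilde{G})$ is MDS if and only if, for every choice of $k$ points $P_{i_1},\ldots,P_{i_k}$ in $\text{\rm supp}(D)$,
\[
    P_{i_1}\oplus\cdots\oplus P_{i_k}\ \neq\ [m_1]Q'_1\oplus\cdots\oplus [m_t]Q'_t.
\]

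Next, exactly as in the proof of Lemma~\ref{mds_const}, any such $k$ points split across the $s$ cosets $\mathcal{Q}_j=Q_j\oplus\{P_1,\ldots,P_n\}$, with $k_j$ points drawn from $\mathcal{Q}_j$ and $k_1+\cdots+k_s=k$, and their group sum equals
\[
    [k_1]Q_1\oplus\cdots\oplus [k_s]Q_s\oplus P',
\]
where $P'$ is a sum of $k$ (not necessarily distinct) $P_i$'s and therefore lies in the subgroup $H:=\langle P_1,\ldots,P_n\rangle$. Rearranging the preceding inequality so that all ``$Q$-terms'' sit on one side, and using $[-m_j]Q'_j=[q_j-m_j]Q'_j$ (valid since $Q'_j$ has order $q_j$), the MDS requirement rewrites as
\[
    [k_1]Q_1\oplus\cdots\oplus [k_s]Q_s\oplus [q_1-m_1]Q'_1\oplus\cdots\oplus [q_t-m_t]Q'_t\ \neq\ -P'
\]
for every admissible $(k_1,\ldots,k_s)$ and every realizable $P'\in H$.

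The last step is to use the coprimality hypothesis to eliminate $P'$ from the condition. Let $K$ be the subgroup generated by $Q_1,\ldots,Q_s,Q'_1,\ldots,Q'_t$; since each order $\mathrm{ord}(Q_j)$ and each $q_j$ is coprime to every $\mathrm{ord}(P_i)$, the order $|K|$ is coprime to $|H|$, so $K\cap H=\{\mathcal{O}\}$. An element of $K$ can equal an element of $H$ only when both are $\mathcal{O}$, and the stated hypothesis says exactly that the left-hand side is $\neq\mathcal{O}$; hence it automatically differs from every $-P'\in H$, giving the MDS property via Lemma~\ref{mds-condition}. I expect the main obstacle to be precisely this coprimality step: were $|K|$ and $|H|$ to share a prime factor, the $P'$-free hypothesis would be too weak, and one would need a stronger condition indexed also by which $k$-fold sums of $P_1,\ldots,P_n$ can realize a prescribed element of $H$. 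Minor hygiene issues—that $\tilde{G}$ has the right support to make $C_L(D,\tilde{G})$ well defined, and the sign of $k-\sum m_j$—can if necessary be bypassed by passing to the linearly equivalent effective divisor $\bigl([m_1]Q'_1\oplus\cdots\oplus [m_t]Q'_t\bigr)+(k-1)\mathcal{O}$ via Lemma~\ref{abel-jacobi}, which does not affect the underlying code up to isometry.
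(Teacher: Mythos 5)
Your proof is correct and follows essentially the same route as the paper, which states this as an immediate corollary of Lemma \ref{mds_const}: you decompose any $k$ points of $\text{\rm supp}(D)$ across the cosets $\mathcal{Q}_j$, move the $Q$-part to one side, and use the coprimality of orders (so that the subgroup generated by the $Q$'s and $Q'$'s meets $\langle P_1,\ldots,P_n\rangle$ trivially) to reduce the MDS criterion of Lemma \ref{mds-condition} to the stated $P'$-free hypothesis. The extra care you take in justifying $K\cap H=\{\mathcal{O}\}$ and in handling the support and sign of $\tilde{G}$ is sound and only makes explicit what the paper leaves implicit.
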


A conjecture regarding the maximal length of MDS codes from elliptic curves given by Han and Ren in \cite{Han-tight} is cited as follows:
\begin{conjecture}
Let $C_L(D,G)$ be an $[n,k]$ MDS code from an elliptic curve $\mathcal{E}$. If $q$ is sufficiently large and $3\le k\le n-3$, then $n\le\frac{\#\mathcal{E}(\mathbb{F}_q)}{2}$. 
\end{conjecture}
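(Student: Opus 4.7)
The plan is to recast the conjecture as a covering statement for the restricted $k$-fold sumset in $\mathcal{E}(\mathbb{F}_q)$ and then apply additive combinatorics together with the structural constraints from Lemmas \ref{T.2.1} and \ref{T.2.2}. Set $A:=\{P_1,\ldots,P_n\}$, $N:=\#\mathcal{E}(\mathbb{F}_q)$, and
\[
    k^\wedge A:=\{P_{i_1}\oplus\cdots\oplus P_{i_k}:1\le i_1<\cdots<i_k\le n\}.
\]
Writing any effective divisor $G$ of degree $k$ with support disjoint from $A$ as $G=Q_1+\cdots+Q_k$ (with multiplicities allowed), and letting $S:=Q_1\oplus\cdots\oplus Q_k$, Lemma \ref{mds-condition} combined with Lemma \ref{abel-jacobi} says that $C_L(D,G)$ is MDS iff $S\notin k^\wedge A$. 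Hence the contrapositive of the conjecture reduces to the purely group-theoretic implication: if $n>N/2$ and $3\le k\le n-3$, then $k^\wedge A=\mathcal{E}(\mathbb{F}_q)$, since this then forces $S\in k^\wedge A$ for every admissible $G$.

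The first move is a symmetry reduction: writing $T:=P_1\oplus\cdots\oplus P_n$, the map $R\mapsto T\oplus[-1]R$ is a bijection between $k^\wedge A$ and $(n-k)^\wedge A$, so it suffices to treat $3\le k\le n/2$. Under a suitable non-degeneracy hypothesis on $A\subseteq\mathcal{E}(\mathbb{F}_q)$, the Dias da Silva--Hamidoune inequality (extended to finite abelian groups by K\'arolyi and others) then gives
\[
    |k^\wedge A|\ge\min\bigl(N,\,k(n-k)+1\bigr).
\]
For $n>N/2$ and $3\le k\le n/2$ the second argument exceeds $N$ as soon as $N$ is not tiny: indeed $k(n-k)\ge 3(n-3)>N-1$ whenever $n>N/3+3$, which is comfortably implied by $n>N/2$. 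This forces $k^\wedge A=\mathcal{E}(\mathbb{F}_q)$ and closes the main step, modulo the non-degeneracy hypothesis.

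The serious obstacle is precisely that non-degeneracy hypothesis: the restricted-sumset bound above can fail when $A$ concentrates in a small number of cosets of some proper subgroup $H\le\mathcal{E}(\mathbb{F}_q)$, and by Lemma \ref{T.2.2} the group $\mathcal{E}(\mathbb{F}_q)$ can carry non-trivial subgroups of small index whenever it is non-cyclic of rank~$2$. To rule this out I would argue that $|A|>N/2$ together with the Hasse estimate $|N-(q+1)|\le 2\sqrt{q}$ from Lemma \ref{T.2.1} is incompatible with such concentration, via a case analysis on the decomposition parameters $a_l$ of Lemma \ref{T.2.2}. The ``sufficiently large $q$'' hypothesis in the conjecture is expected to enter exactly here: it absorbs the $O(\sqrt{q})$ deficiency coming from Hasse against the $\Omega(q)$ gain from $n>N/2$. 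The tightest instance, and the one I expect to require the most delicate argument, is the extreme pair $k=3$ (equivalently $k=n-3$) when $\mathcal{E}(\mathbb{F}_q)$ admits a subgroup of index $2$, because then the coset structure is most compatible with a large subset $A$; in that case an extra geometric input---using that the quotient $2$-isogeny $\phi:\mathcal{E}\to\mathcal{E}'$ must send $A$ essentially onto $\mathcal{E}'(\mathbb{F}_q)$---is needed to exclude the bad configurations and complete the proof.
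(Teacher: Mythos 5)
Be aware first that the statement you are proving is the Han--Ren conjecture, which this paper does not prove: its Theorem 3.1 establishes only the special case of iso-dual codes, precisely because the general case runs into the obstruction your sketch leaves open. Your reduction of the conjecture, via Lemma \ref{mds-condition} and the Abel--Jacobi theorem, to the covering statement ``$n>\#\mathcal{E}(\mathbb{F}_q)/2$ forces $\Sigma_k(A)=\mathcal{E}(\mathbb{F}_q)$'' is correct, as is the symmetry $k\leftrightarrow n-k$. The genuine gap is the inequality $|k^{\wedge}A|\ge\min\bigl(N,\,k(n-k)+1\bigr)$ with $N=\#\mathcal{E}(\mathbb{F}_q)$. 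The Dias da Silva--Hamidoune theorem gives this only for $\mathbb{Z}/p\mathbb{Z}$ with $p$ prime; the known extensions to general finite abelian groups (K\'arolyi's work already for $k=2$) replace $N$ by the smallest prime factor of $N$, and since $\#\mathcal{E}(\mathbb{F}_q)$ is very often even, the bound degenerates to $\min(2,\cdot)$ and says nothing. Your ``non-degeneracy hypothesis'' is therefore not a side condition but the whole content of the problem. Observing that $n>N/2$ prevents $A$ from lying inside a single coset of an index-two subgroup does not suffice: $A$ can still have all but a handful of elements in one such coset, in which case covering the target coset reduces to a restricted-sumset problem \emph{inside} the index-two subgroup, where the same degeneracy can recur. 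The only tool of this type available here is Lemma \ref{sigma3}, which treats $k=3$ only and itself carries the index-two exception.

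This is exactly why the paper restricts to iso-dual codes: there $k=n/2$ and ${\rm sum}(D)=[2]\,{\rm sum}(G)$, and Lemma \ref{isodual-max} uses these two extra constraints to remove $k-3$ (or $k-2$) elements so that the remaining $k+3$ elements all lie in the index-two subgroup $G_1$ with $k+3>\frac{2}{5}\#G_1$, at which point Lemma \ref{sigma3} applies inside $G_1$. For the general conjecture your symmetry reduction only gives $k\le n/2$, so after removing $k-3$ elements you retain $n-k+3\approx N/4$ elements when $k$ is near $n/2$, which falls below the $\frac{2}{5}$-threshold of Lemma \ref{sigma3}; and for $3<k<n/2$ there is no sum condition to steer the target into the reachable coset. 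The cases $k=3$ and $k=n-3$ do follow from Lemma \ref{sigma3} together with $n>N/2$ (for $q$ large the hypothesis $\#A>\max\{\frac{2}{5}N,12\#G[2]+54\}$ holds and the coset alternative is excluded), but the intermediate range of $k$ remains open, and the concluding appeal to a $2$-isogeny is not developed enough to close it. As written, the proposal establishes neither the conjecture nor the paper's Theorem 3.1.
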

Han and Ren also proved that this upper bound is tight for $q\ge 289$ and $3\le k\le \frac{q+1-2\sqrt{q}}{10}$ with $q$ odd and $k$ odd in \cite{Han-maximal}. Suppose that $G$ is a finite abelian group and $S$ is a non-empty subset of elements of $G$. Denote by
\[
	\Sigma_k(S) = \{g\in G:g=a_{i_1}+\cdots+a_{i_k}\ {\rm where}\ a_{i_j}\in S\},
\]
and $G[2]$ the 2-torsion subgroup of $G$. For our purpose, we present a useful lemma given in \cite{Han-maximal} as follows.
\begin{lemma}\label{sigma3}
Let $A$ be a subset of an abelian group $G$ with 
\[
	\#A >{\rm max}\{\frac{2}{5}\#G, 12\#G[2] +54\}.
\]
Then either $\Sigma_3(A)=G$, or $A$ is contained in a coset of an index-two subgroup of $G$.
\end{lemma}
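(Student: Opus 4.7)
The plan is to apply Kneser's theorem to bound $\#\Sigma_3(A)$ from below, and then to carry out a short case analysis on the index of the stabilizer $H=\mathrm{Stab}(\Sigma_3(A))$ in $G$.

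First I would invoke the multi-summand form of Kneser's theorem, which yields
\[
    \#\Sigma_3(A) \;\ge\; 3\,\#(A+H) - 2\,\#H \;\ge\; 3\,\#A - 2\,\#H.
\]
Since $\Sigma_3(A)$ is a union of cosets of $H$ and, by assumption, is not all of $G$, we also have $\#\Sigma_3(A) \le \#G - \#H$. Combining the two gives $3\,\#A \le \#G + \#H$, so the hypothesis $\#A > \tfrac{2}{5}\#G$ forces $\#H > \tfrac{1}{5}\#G$, and in particular $[G:H]\in\{1,2,3,4\}$. I would then pass to the quotient $\bar G = G/H$ with image $\bar A$ of $A$, recording that $\Sigma_3(\bar A)\ne \bar G$.

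The four cases are dispatched as follows. The case $[G:H]=1$ contradicts $\Sigma_3(A)\ne G$ at once. For $[G:H]=3$, the hypothesis $\#A>\tfrac{2}{5}\#G$ rules out $\#\bar A=1$, while a direct check in $\mathbb{Z}/3\mathbb{Z}$ shows that $\#\bar A\ge 2$ already forces $\Sigma_3(\bar A)=\mathbb{Z}/3\mathbb{Z}$. For $[G:H]=2$, a similar check in $\mathbb{Z}/2\mathbb{Z}$ pins down $\#\bar A=1$, so $A$ lies in a single coset of the index-two subgroup $H$, which is the desired conclusion. The main obstacle is the case $[G:H]=4$, where $\bar G$ is isomorphic to $\mathbb{Z}/4\mathbb{Z}$ or $\mathbb{Z}/2\mathbb{Z}\times\mathbb{Z}/2\mathbb{Z}$. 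Here I would enumerate subsets $\bar A\subseteq \bar G$ up to translation and verify that every configuration except $\bar A$ being a coset of an order-two subgroup $\bar K\le\bar G$ forces $\Sigma_3(\bar A)=\bar G$; in the remaining configuration the preimage of $\bar K$ in $G$ is the desired index-two subgroup containing $A$. The supplementary hypothesis $\#A>12\,\#G[2]+54$ should enter precisely at this step to control the number of candidate index-two subgroups of $G$ (parametrised by characters of $G$, hence by $G[2]$) and to absorb the finite list of small exceptional configurations that Kneser's bound leaves behind.
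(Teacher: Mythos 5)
First, a remark on the comparison itself: the paper does not prove this lemma at all --- it is quoted from Han and Ren \cite{Han-maximal} --- so there is no in-paper argument to measure your proposal against, and it must be judged on its own terms. Judged that way, it has a genuine gap at the very first step. Kneser's theorem (including its multi-summand form) bounds the iterated sumset $A+A+A$, in which a summand may be repeated; but $\Sigma_3(A)$, as defined in the paper and as it must be read for the application in Lemma \ref{isodual-max} (the three elements $c_1,c_2,c_3$ there have to be three \emph{distinct} members of $A\setminus B$, because via Lemma \ref{mds-condition} they correspond to distinct evaluation points), is the restricted sumset of sums of three pairwise distinct elements of $A$. The inequality $\#\Sigma_3(A)\ge 3\#(A+H)-2\#H$ is therefore not a consequence of Kneser's theorem. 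The appropriate tool is a restricted-addition analogue such as the DeVos--Goddyn--Mohar theorem on $n$-term subsequence sums, whose lower bound $\#H\bigl(\sum_{Q\in G/H}\min\{3,\#(A\cap Q)\}-2\bigr)$ degrades exactly when some $H$-cosets meet $A$ in only one or two elements; those degenerate distributions are where the real work lies, and your coset-by-coset case analysis (which implicitly lets each class of $\bar A$ be used with multiplicity three) does not address them.

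A telling symptom is that your argument never actually uses the hypothesis $\#A>12\#G[2]+54$: for the unrestricted sumset $A+A+A$ the conclusion already follows from $\#A>\frac{2}{5}\#G$ alone by precisely your Kneser-plus-index-at-most-four analysis, so the extra hypothesis would be superfluous, which should have been a warning sign. Its presence signals that the restricted nature of $\Sigma_3$ is essential: $2$-torsion genuinely obstructs restricted sums (for instance $b+c=0$ with $b\ne c$ has no solution in an elementary abelian $2$-group, and indeed the lemma is deliberately vacuous for such groups since then $12\#G[2]+54>\#G\ge\#A$), so this obstruction must be confronted quantitatively rather than ``absorbed'' at the end. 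Your guess that the term merely controls the number of index-two subgroups via characters is not its actual role. To repair the proof you would need to replace Kneser by a restricted Kneser-type theorem, redo the analysis with the multiplicities $\min\{3,\#(A\cap Q)\}$ in place of $3$, and track explicitly where $\#G[2]$ enters the resulting lower bound.
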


\section{Maximal length of iso-dual MDS elliptic codes}
Before giving the main results, we first present the following important lemma for our purpose.
\begin{lemma}\label{isodual-max}
Let $G$ be a finite abelian group with an even order, and let $g$ be an element of $G$. Suppose that $n\ge\frac{\#G}{2}+1$ is an even number, and $A=\{a_1,\ldots,a_n\}$ is a subset with
\begin{itemize}
\item $g\notin A$,
\item $a_1+\cdots+a_n=2g$.
\end{itemize}
Then we have $g\in\Sigma_{k}(A)$ with $k=\frac{n}{2}$.
\end{lemma}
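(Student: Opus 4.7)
The plan is to combine the complementation symmetry forced by $\sum_{a\in A}a=2g$ with the density bound $n\ge \#G/2+1$, and then close out via Lemma \ref{sigma3}.

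First, since $\sum(A\setminus B)=2g-\sum B$ for every $B\subseteq A$, the set $T:=\Sigma_{n/2}(A)$ is invariant under the involution $s\mapsto 2g-s$, and $g$ is the unique self-symmetric center; so one only needs to produce a single $n/2$-subset of $A$ summing to $g$, rather than any finer description of $T$. Second, the hypothesis $\#A\ge \#G/2+1$ together with the elementary pigeonhole gives, for every $d\in G$, $|A\cap(A+d)|\ge 2\#A-\#G\ge 2$, so $A-A=G$ and in particular $A$ is not contained in any coset of an index-two subgroup of $G$ (such a coset has at most $\#G/2$ elements). Consequently Lemma \ref{sigma3} --- whose secondary side-condition $\#A>12\#G[2]+54$ will be in force in the elliptic-curve regime where this lemma is subsequently applied --- forces $\Sigma_3(A)=G$.

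To bootstrap from $\Sigma_3(A)=G$ to $g\in\Sigma_{n/2}(A)$, I would fix an auxiliary $(n/2-3)$-subset $B'\subseteq A$ and set $h:=g-\sum B'\in G$. Because $\Sigma_3(A)=G$, there exist $a_1,a_2,a_3\in A$ with $a_1+a_2+a_3=h$; if they can be chosen distinct and disjoint from $B'$, then $B:=B'\sqcup\{a_1,a_2,a_3\}$ has $\#B=n/2$ and $\sum B=g$, and we are done. The essential difficulty is the disjointness, i.e.\ producing a 3-representation of $h$ whose three elements lie entirely in $A\setminus B'$. I plan to argue this by counting: the number of unordered 3-subsets of $A$ summing to $h$ is generically of order $\#A^{2}/\#G$, while the number that meet the fixed set $B'$ is $O(\#B'\cdot \#A/\#G)$, so for $n$ sufficiently close to $\#G$ (in particular $n\ge \#G/2+1$, after careful bookkeeping) a disjoint 3-block always remains. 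This disjointness step, especially at the tight extremal end $n\approx \#G/2$, is the step I expect to be the main obstacle; refining by choosing $B'$ less arbitrarily, or by swapping two elements in and out, may be needed.

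If the combinatorial bookkeeping turns out too tight, the fallback is a character-sum identity expressing the count of $n/2$-subsets of $A$ summing to $g$ as
\[
\#\{B\subseteq A:\#B=n/2,\ \textstyle\sum B=g\}=\frac{1}{\#G}\sum_{\chi}\chi(-g)\,[t^{n/2}]\prod_{a\in A}(1+t\chi(a)),
\]
and exploiting the constraint $\prod_{a\in A}\chi(a)=\chi(2g)=\chi(g)^2$ to bound the non-trivial character contributions, so that the trivial character term $\binom{n}{n/2}/\#G$ dominates and forces the count to be positive.
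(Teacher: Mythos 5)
There is a genuine gap, and you have correctly located it yourself: the disjointness step. Your plan fixes an arbitrary $(k-3)$-subset $B'\subseteq A$, sets $h=g-\sum B'$, and then needs a representation $h=c_1+c_2+c_3$ with three \emph{distinct} elements of $A\setminus B'$. Lemma \ref{sigma3} only yields the existence of a single representation with elements of $A$; it gives no multiplicity information, so the claim that the number of $3$-subsets of $A$ summing to $h$ is ``generically of order $\#A^2/\#G$'' has no support from the tools at hand, and the character-sum fallback is likewise only a sketch (nothing bounds the nontrivial character contributions, and near the extremal size $n\approx\#G/2$ there is no room for losses). Nor can you recover disjointness by reapplying Lemma \ref{sigma3} to $A\setminus B'$ inside $G$: that set has only about $\#G/4$ elements, below the $\frac{2}{5}\#G$ threshold.

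The paper's proof closes exactly this gap by using the evenness of $\#G$, which your argument never exploits beyond the symmetry remark. Write $G=G_1\cup G_2$ with $G_1$ an index-two subgroup and $G_2$ its nontrivial coset, and let $n_i=\#(A\cap G_i)$; say $n_1\ge k$. When $n_2\le k-3$, the set $B$ is chosen to absorb \emph{all} of $A\cap G_2$ (or all but one element, according to whether $g\in G_1$ or $g\in G_2$) together with enough elements of $G_1$ to reach size $k-3$ (resp.\ $k-2$). Then $A\setminus B$ lies entirely in $G_1$, has $k+3$ (resp.\ $k+2$) elements, and $k+3>\frac{2}{5}\#G_1$ because $G_1$ is only half the size of $G$; a short parity check using $\sum_{a\in A}a=2g\in G_1$ shows $g-\sum B\in G_1$, so Lemma \ref{sigma3} applied \emph{inside $G_1$ to $A\setminus B$} produces the three elements, and disjointness from $B$ is automatic by construction. (The remaining case $n_2\ge k-2$ is handled directly.) This passage to the index-two subgroup is the idea missing from your proposal. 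Two smaller points: your observations that $\Sigma_{k}(A)$ is invariant under $s\mapsto 2g-s$ and that $A$ cannot lie in a coset of an index-two subgroup are correct but are not what is needed; and you rightly note that the side condition $\#A>12\,\#G[2]+54$ of Lemma \ref{sigma3} is not among the lemma's stated hypotheses --- the paper's proof silently relies on the analogous condition for $A\setminus B$ in $G_1$ as well, so this is a shared, not a new, deficiency.
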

\begin{proof}
Since the order of $G$ is even, then $G$ has an index two subgroup $G_1$ and there exists an element $g_2\in G$ such that $G=G_1\cup (g_2+G_1)$. Denote by $G_2=g_2+G_1$ and $n_i=\#A\cap G_i$. Then we have $n_i>0$ since $\#G_i=\frac{\#G}{2}$. Moreover, at least one of $n_1$ and $n_2$ is not less than $k$. W.o.l.g, suppose that $n_1\ge k$. Note that $2g\in G_1$. It follows that
\[
	b_1+\cdots+b_{n_2}\in G_1
\]
where $b_j\in G_2$ for $1\le j\le n_2$. We consider that $n_2\le k-3$ and distinguish two cases, namely $g\in G_1$ and $g\in G_2$.
\begin{itemize}
\item Case I: $g\in G_1$. Since $n_2\le k-3$, there exists a subset $B\subset A$ such that $\#B\cap G_2=n_2$ and $\#B\cap G_1=k-3-n_2$. Then $\#A\backslash B=k+3$ and $A\backslash B\subset G_1$. We can obtain
\[
	k+3>\frac{\#G+10}{4}\ge\frac{\#G}{5}=\frac{2}{5}\#G_1.
\]
Assume that the sum of all elements in $B$ is $h$. Thus from Lemma \ref{sigma3}, we can find three elements $c_1,c_2,c_3$ in $A\backslash B$ such that $c_1+c_2+c_3=g-h$. Therefore $g\in\Sigma_{k}(A)$.
\item Case II: $g\in G_2$. Similarly, there exists a subset $B\subset A$ such that $\#B\cap G_2=n_2-1$ and $\#B\cap G_1=k-2-n_2$. Then $\#(A\backslash B)\cap G_1=k+2$. We can also obtain that $k+2\ge\frac{2}{5}\#G_1$. Since $g-h\in G_1$, it follows that $g\in\Sigma_{k}(A)$ by Lemma \ref{sigma3}.
\end{itemize}
If $n_2\ge k-2$, then $g\in\Sigma_{k}(A)$ is trivial since any $k+3$ elements of $A$ is not contained in either $G_1$ or $G_2$. Thus the proof is completed.
\end{proof}
It is worth noting that every canonical divisor is principal in elliptic function fields (see Lemma 10 in \cite{Jin-self-dual}). Now, we present our result as follows.
\begin{theorem}
Suppose that $C_L(D,G)$ is a $q$-ary $[n,k]$ iso-dual MDS code from elliptic curve $\mathcal{E}$. Then we have $n\le\frac{\#\mathcal{E}(\mathbb{F}_q)}{2}$.
\end{theorem}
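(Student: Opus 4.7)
The plan is to convert the iso-dual hypothesis into the divisor equivalence $D\sim 2G$ and then pit the MDS condition against Lemma~\ref{isodual-max} to rule out $n>\#\mathcal{E}(\mathbb{F}_q)/2$.

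First, since $\mathcal{E}$ has genus one, the Riemann--Roch theorem gives $k=\deg(G)$. By the criterion recalled after Proposition~\ref{dualAG}, $C_L(D,G)$ is iso-dual exactly when $(\eta)\sim D-2G$ for the relevant Weil differential $\eta$. Because every canonical divisor on an elliptic function field is principal (the remark just before the theorem), $(\eta)\sim 0$ and hence $D\sim 2G$. Applying Abel--Jacobi (Lemma~\ref{abel-jacobi}) to the degree-zero divisor $D-2G$ then forces
\[
n=\deg(D)=\deg(2G)=2k\qquad\text{and}\qquad \mathrm{sum}(D)=[2]g,\quad g:=\mathrm{sum}(G).
\]

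Second, write $A=\{P_1,\ldots,P_n\}$. Lemma~\ref{mds-condition} translates the MDS property into $g\notin\Sigma_k(A)$. Assume for contradiction that $n>\#\mathcal{E}(\mathbb{F}_q)/2$. In the principal case, namely $\#\mathcal{E}(\mathbb{F}_q)$ even and $g\notin A$, every hypothesis of Lemma~\ref{isodual-max} applied to the ambient group $\mathcal{E}(\mathbb{F}_q)$ is satisfied: $n=2k$ is even, $n\ge\#\mathcal{E}(\mathbb{F}_q)/2+1$, $g\notin A$, and $\mathrm{sum}(A)=[2]g$. Its conclusion $g\in\Sigma_k(A)$ contradicts the MDS condition, and the proof is complete in this case.

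The main obstacle will be the two residual situations. When $g\in A$, I would work with $A':=A\setminus\{g\}$ of size $n-1$ and total sum $g$; the $k$-subsets of $A$ containing $g$ are in bijection with $(k-1)$-subsets of $A'$, so the MDS condition becomes $\mathcal{O}\notin\Sigma_{k-1}(A')$, and I would derive the contradiction by re-running the index-two coset splitting plus Lemma~\ref{sigma3} argument of Lemma~\ref{isodual-max} with $\mathcal{O}$ as the target and the parity accounting re-balanced. When $\#\mathcal{E}(\mathbb{F}_q)$ is odd there is no index-two subgroup, so the ``contained in a coset'' alternative of Lemma~\ref{sigma3} is vacuous and the lemma already yields $\Sigma_3(A)=\mathcal{E}(\mathbb{F}_q)$; I would then upgrade this to $g\in\Sigma_k(A)$ by fixing $k-3$ elements of $A$ with sum $s_0$ and using Lemma~\ref{sigma3} on the remaining $k+3$ elements to hit $g\ominus s_0$, again contradicting $g\notin\Sigma_k(A)$.
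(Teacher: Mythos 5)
Your main argument is precisely the paper's proof: iso-duality makes $D-2G$ principal, so $n=2k$ and $\mathrm{sum}(D)=[2]\,\mathrm{sum}(G)$ by Lemma~\ref{abel-jacobi}, and Lemma~\ref{isodual-max} then yields $k$ points of $\mathrm{supp}(D)$ summing to $\mathrm{sum}(G)$, contradicting Lemma~\ref{mds-condition}. The two residual cases you flag ($\mathrm{sum}(G)\in\mathrm{supp}(D)$, and $\#\mathcal{E}(\mathbb{F}_q)$ odd) are real gaps in the hypotheses of Lemma~\ref{isodual-max} that the paper silently ignores, so raising them is to your credit; be aware, however, that your sketch for the odd-order case does not close as written, since the $k+3$ leftover elements satisfy only $k+3>\tfrac{1}{4}\#\mathcal{E}(\mathbb{F}_q)+3$, which falls short of the $\tfrac{2}{5}\#\mathcal{E}(\mathbb{F}_q)$ threshold needed to apply Lemma~\ref{sigma3} to that subset once $\#\mathcal{E}(\mathbb{F}_q)\ge 20$.
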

\begin{proof}
Since $C_L(D,G)$ is iso-dual, it follows that $G$ is equivalent to $D-G+(\eta)$, i.e., 
\[
	D-2G\sim -(\eta).
\]
We can obtain that $D-2G$ is a principal divisor. Suppose that ${\rm supp}(D)=\{P_1,\ldots,P_n\}$ and ${\rm sum}(G)=Q$. Then we have
\[
	P_1\oplus\cdots\oplus P_n=[2]Q.
\]
By Lemma \ref{isodual-max}, if $n\ge\frac{\#\mathcal{E}(\mathbb{F}_q)}{2}+1$, then we can find $k$ points in ${\rm supp}(D)$ such that
\[
	P_{i_1}\oplus\cdots\oplus P_{i_k}=Q.
\]
It is contradictory to $C_L(D,G)$ is MDS by Lemma \ref{mds-condition}.
\end{proof}

\section{MDS Elliptic Codes Over Fields of even Characteristics}
In this section, we assume that $q$ is a power of 2. Then we will present a construction of iso-dual MDS codes over $\mathbb{F}_q$. We also consider the hull dimension of these codes.

{\it Construction 1:} Suppose that $\mathcal{E}$ is an elliptic curve defined over $\mathbb{F}_q$ with even order. Then there exists a non-zero point $Q_1=(0,\gamma_1)$ with order 2. Let $k$ be an even number, and $n=2k$. We define a point set $\mathcal{P}=\{P_{\alpha_i}^+,P_{\alpha_i}^-| i=1,\ldots,k\}$ such that each point has odd order in $\mathcal{E}(\mathbb{F}_q)$ with
\[
P_{\alpha_i}^+\oplus P_{\alpha_i}^-=\mathcal{O}
\]
for $i=1,\ldots,k$. Denote by
\[
	\{P_1,\ldots,P_n\} := \{Q_1\oplus P_{\alpha_i}^+,Q_1\oplus P_{\alpha_i}^-| i=1,\ldots,k\}.
\]
Let $D=P_1+\cdots+P_n$ and $G=(k-1)\mathcal{O}+Q_1$ be two effective divisors, and let $C_L(D,G)$ be the AG code associated with $D$ and $G$. Then we have the following results.

\begin{theorem}\label{isomdseven}
The code $C_L(D,G)$ given in Construction 1 is an $[n,k,n-k+1]$ iso-dual code.
\end{theorem}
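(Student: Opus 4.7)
The plan is to verify two properties of $C_L(D,G)$: the MDS parameters $[n,k,n-k+1]$ and iso-duality. The dimension equals $\deg(G) = k$ by Riemann--Roch (applied to $\deg(G)=k\ge 2 > 2g-2$), and the minimum distance $n-k+1$ will fall out of the MDS certification.

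For the MDS step, I would apply Lemma \ref{mds-condition}. Since ${\rm sum}(G) = [k-1]\mathcal{O} \oplus Q_1 = Q_1$, the task is to show that no $k$-subset of $\{P_1,\ldots,P_n\}$ has group sum equal to $Q_1$. Writing $P_j = Q_1 \oplus R_j$ with $R_j \in \{P_{\alpha_i}^+, P_{\alpha_i}^-\}$ of odd order, any such $k$-fold sum equals
\[
    [k]Q_1 \oplus (R_{j_1} \oplus \cdots \oplus R_{j_k}).
\]
Because $k$ is even and $Q_1$ has order $2$, the first term vanishes; the second lies in the subgroup of odd-order elements of $\mathcal{E}(\mathbb{F}_q)$, which cannot contain $Q_1$. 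The inequality, and hence the MDS property, follows.

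For iso-duality, I would invoke Proposition \ref{dualAG}: fix a Weil differential $\eta$ with $v_{P_i}(\eta) = -1$ and $\eta_{P_i}(1) = 1$, so that $C_L(D,G)^\perp = C_L(D, D - G + (\eta))$. The key step is to show $D - 2G + (\eta)$ is principal. Its degree is $n - 2k + (2g-2) = 0$. For the group sum, each pair $\{Q_1 \oplus P_{\alpha_i}^+,\ Q_1 \oplus P_{\alpha_i}^-\}$ sums to $[2]Q_1 \oplus (P_{\alpha_i}^+ \oplus P_{\alpha_i}^-) = \mathcal{O}$, so ${\rm sum}(D) = \mathcal{O}$; similarly ${\rm sum}(2G) = [2]Q_1 = \mathcal{O}$; and $(\eta)$ is principal on an elliptic curve (as noted in the remark just before the theorem via \cite{Jin-self-dual}). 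Lemma \ref{abel-jacobi} then supplies $f \in \mathbb{F}_q(\mathcal{E})$ with $(f) = D - 2G + (\eta)$, and multiplication by $f$ yields an isomorphism $\mathcal{L}(G) \to \mathcal{L}(D - G + (\eta))$, $g \mapsto fg$, which on evaluation gives $C_L(D, D - G + (\eta)) = C_L(D, G, {\bf v})$ with $v_i = f(P_i)$. The local computation $v_{P_i}(f) = 1 + 0 - 1 = 0$ guarantees each $v_i \in \mathbb{F}_q^*$, so ${\bf v} \in (\mathbb{F}_q^*)^n$ and iso-duality is established.

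The main obstacle I anticipate is the parity bookkeeping in the MDS step: the even characteristic hypothesis is what forces $Q_1$ to have order exactly $2$, and the evenness of $k$ is what kills $[k]Q_1$; without both of these, one cannot cleanly separate the order-$2$ contribution from the odd-order remainder. The iso-dual half, by contrast, is essentially a routine Abel--Jacobi divisor calculation once $(\eta)$ is recognized as principal, with the explicit gluing vector ${\bf v}$ coming for free from the function $f$.
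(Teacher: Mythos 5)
Your proof is correct, and its MDS half coincides with the paper's: the paper routes the argument through Corollary \ref{mdsmul}, which is exactly your parity computation ($[k]Q_1=\mathcal{O}$ for $k$ even, and an odd-order sum cannot equal the order-two point $Q_1$) packaged as a lemma. The iso-dual half, however, takes a genuinely different route. The paper constructs the explicit function $h=\prod_{i=1}^{k}(x-\alpha_i^1)$ vanishing to order one at each $P_i$ --- this is where the pairing $P_{\alpha_i}^+\oplus P_{\alpha_i}^-=\mathcal{O}$ is really exploited, since it forces the two points of each pair to share an $x$-coordinate --- then takes $\eta=dh/h$ and computes $D-G+(\eta)=G+(h'_x)$ directly using Lemma \ref{dx}. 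You instead argue abstractly: $\deg\bigl(D-2G+(\eta)\bigr)=0$, ${\rm sum}(D)={\rm sum}(2G)=\mathcal{O}$, and $(\eta)$ is principal on an elliptic curve, so Abel--Jacobi (Lemma \ref{abel-jacobi}) supplies some $f$ with $(f)=D-2G+(\eta)$, and the local check $v_{P_i}(f)=0$ yields the scaling vector. Both arguments are valid; the only slip is that with your normalization of $(f)$ the isomorphism is $g\mapsto g/f$ rather than $g\mapsto fg$, which merely inverts $\mathbf{v}$ and does not affect equivalence. What the paper's explicit computation buys is the concrete vector $v_i=\tfrac{1}{h'_x}(P_i)$, and this is not a luxury: it is the input to Corollary \ref{hulleven} (the hull-dimension bound) and to the remark that follows it, where self-dual and LCD codes are obtained by taking square roots of the $v_i$. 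Your existence argument proves the theorem as stated but would leave those downstream results without their starting data.
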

\begin{proof}
Firstly we need to prove that $C_L(D,G)$ is MDS. Since $k$ is an even number and $Q_1$ have order 2, it is clear that
\[
	[k]Q_1\oplus [2-1]Q_1\neq\mathcal{O}.
\]
From Corollary \ref{mdsmul} we can obtain that $C_L(D,G)$ is MDS. Then we consider that $C_L(D,G)$ is iso-dual. Note that for any $i=1,\ldots,k$, we have
\[
	Q_1\oplus P_{\alpha_i}^+\oplus Q_1\oplus P_{\alpha_i}^-=\mathcal{O}
\]
and
\[
	Q_1\oplus P_{\alpha_i}^+\neq Q_1.
\]
Therefore there exists a non-zero element $\alpha_i^1$ such that
\[
Q_1\oplus P_{\alpha_i}^+=(\alpha_i^1,\beta_i^+),Q_1\oplus P_{\alpha_i}^-=(\alpha_i^1,\beta_i^-).
\]
Let $h=\prod_{i=1}^{k}(x-\alpha_i^1)$, then 
\[
	(h) = P_1+\cdots+P_n-n\mathcal{O}.
\]
Thus from Proposition \ref{dualAG} and Lemma \ref{dx}, we have
\[
	(\eta)=(h'_x)+(dx)+n\mathcal{O}-D,
\]
and 
\begin{align*}
	D-G+(\eta)&=D-(k-1)\mathcal{O}-Q_1+(h'_x)+(x)-D+n\mathcal{O}\\
 	&=(k+1)\mathcal{O}-Q_1+(h'_x)+2Q_1-2Q_1\\
	&=(k-1)\mathcal{O}+Q_1+(h'_x).
\end{align*}
It follows that
\begin{align*}
	C_L(D,D-G+(\eta))&=C_L(D,G+(h'_x))\\
	&=C_L(D,G,{\bf v})
\end{align*}
where ${\bf v}=\{v_1,\ldots,v_n\}$ and $v_i=\frac{1}{h'_x}(P_i)$. The proof is completed.
\end{proof}
We will present some examples of Construction 1 following the result below.
\begin{corollary}\label{hulleven}
Suppose that $C_L(D,G)$ is given in Construction 1 with parameters $[n,k,n-k+1]$ and $n\ge8$. Then we have 
\[
{\rm dim}(C_L(D,G)\cap C_L(D,G)^{\perp})\le k-1.
\]
\end{corollary}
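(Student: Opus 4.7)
The plan is to show that $C:=C_L(D,G)$ and $C^{\perp}$ are distinct codes; since both have dimension $k$, this will force $\dim(C\cap C^{\perp})\leq k-1$ as required. By Theorem~\ref{isomdseven}, $C^{\perp}=C_L(D,G,\mathbf{v})$ with $v_i=1/h'_x(P_i)$ and $h(x)=\prod_{i=1}^{k}(x-\alpha_i^1)$. The constant function $1\in\mathcal{L}(G)$ gives $\mathbf{1}=(1,\ldots,1)\in C$, so it suffices to show $\mathbf{1}\notin C^{\perp}$, that is, to show no $g\in\mathcal{L}((k-1)\mathcal{O}+Q_1)$ satisfies $v_i\,g(P_i)=1$ for all $i$, equivalently $g(P_i)=h'_x(P_i)$ for all $i$.

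Next I would carry out a pole-order/Riemann--Roch argument to rule out such a $g$. Suppose for contradiction it exists; then $g-h'_x$ vanishes on every point of $D$. Because $g$ has pole divisor at most $(k-1)\mathcal{O}+Q_1$ and $h'_x$, being a polynomial in $x$, has its only pole at $\mathcal{O}$, the difference lies in $\mathcal{L}(E-D)$ where $E$ is the componentwise maximum of these two pole divisors. A direct degree count using $\deg D=n=2k$ yields $\deg(E-D)<0$ whenever $n\geq 8$, so $\mathcal{L}(E-D)=\{0\}$ and therefore $g=h'_x$ identically on $\mathcal{E}$.

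The contradiction now comes from the characteristic and parity hypotheses. Since $q$ is even and $k$ is even, the formal derivative $h'_x=\sum_{i\text{ odd}}c_i\,x^{i-1}$ annihilates every even-degree monomial of $h$, so $\deg_x h'_x\leq k-2$, and the pole of $h'_x$ at $\mathcal{O}$ has order $2\deg_x h'_x$. Provided the $x^{k-1}$-coefficient $\sum_i\alpha_i^1$ of $h$ is nonzero, this order equals $2(k-2)\geq k>k-1$ once $k\geq 4$, contradicting $g\in\mathcal{L}((k-1)\mathcal{O}+Q_1)$. The main obstacle is the degenerate possibility $\deg_x h'_x<k/2$, which would place $h'_x$ inside $\mathcal{L}((k-1)\mathcal{O})$ and make the test with $\mathbf{1}$ inconclusive; to cover this I would replace $\mathbf{1}$ by the evaluation of some $f\in\mathcal{L}(G)$ realizing the maximal pole order $k-1$ at $\mathcal{O}$, so that $fh'_x$ has pole of order $k-1+2\deg_x h'_x>k-1$ whenever $h'_x$ is nonconstant and the same Riemann--Roch contradiction reapplies; the bound $n\geq 8$ is precisely what excludes the truly degenerate $k=2$ case, in which $h'_x$ is always constant.
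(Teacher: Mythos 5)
Your approach is essentially the paper's: both proofs test the codeword $\mathbf{1}$ (the paper phrases it as taking $f=1$ under the hypothesis $\dim(C\cap C^{\perp})=k$), force $g=h'_x$ (resp.\ $g=h'_xf$) by comparing the number of zeros of $h'_x-g$ on $\mathrm{supp}(D)$ with its pole degree, and then derive a contradiction from the pole order of $h'_x$ at $\mathcal{O}$. Your Riemann--Roch formulation via $\mathcal{L}(E-D)=\{0\}$ and the paper's strict-triangle-inequality zero count are the same computation. To your credit, you explicitly flag the one genuinely delicate point: in characteristic $2$ with $k$ even one has $h'_x=e_1x^{k-2}+e_3x^{k-4}+\cdots+e_{k-1}$, where $e_j$ denotes the $j$-th elementary symmetric function of the $\alpha_i^1$, so $\deg_x h'_x=k-2$ requires $e_1=\sum_i\alpha_i^1\neq 0$; the paper asserts $\deg_x(h'_x)=k-2$ from the parity of $k$ alone and never addresses this.

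However, your patch does not close the gap. Replacing $\mathbf{1}$ by an $f$ with $v_{\mathcal{O}}(f)=-(k-1)$ does handle every nonconstant $h'_x$ of low degree (one gets $g=h'_xf$, whence $v_{\mathcal{O}}(g)=-(k-1)-2\deg_xh'_x<-(k-1)$ as soon as $\deg_xh'_x\ge 1$), but the residual case $\deg_xh'_x=0$ survives, and your closing claim that $n\ge 8$ excludes it is false: $h'_x$ is a nonzero constant precisely when $e_1=e_3=\cdots=e_{k-3}=0$, which is $k/2-1$ conditions that can a priori be satisfied for any even $k\ge 4$, not only for $k=2$. In that case ${\bf v}=(e_{k-1}^{-1},\ldots,e_{k-1}^{-1})$ is a constant vector, so $C^{\perp}=C_L(D,G,{\bf v})=C_L(D,G)$ equals $C$ outright and no choice of test function can yield a contradiction; the only way to finish is to show that Construction~1 never produces a separable $h$ with all of $e_1,e_3,\ldots,e_{k-3}$ vanishing. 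Neither you nor the paper does this, so both arguments are incomplete at the same spot --- but yours at least isolates exactly where the remaining work lies.
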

\begin{proof}
Suppose that ${\rm dim}(C_L(D,G)\cap C_L(D,G)^{\perp})=k$. It means that for any function $f\in\mathcal{L}(G)$, there exists a function $g\in\mathcal{L}(G)$ such that
\[
	f(P_i)= \frac{1}{h'_x}g(P_i),
\]
i.e.,
\[
	(h'_xf-g)(P_i)=0
\]
for $i=1,\ldots,n$. Assume that the function $y-\gamma_1$ has zeros $Q_1,Q_2,Q_3$. Then we have
\[
	(\frac{y-\gamma_1}{x})=Q_2+Q_3-Q_1-\mathcal{O}.
\]
Therefore $\mathcal{L}(G)$ has a basis as
\[
	\{x^{\frac{k-2}{2}},\ldots,x,1,\frac{y-\gamma_1}{x},\ldots,(y-\gamma_1)x^{\frac{k-4}{2}}\}.
\]
We consider the function $f=1$. From the basis we can obtain that $xg\in\mathcal{L}((k+1)\mathcal{O})$ for any $g\in\mathcal{L}(G)$. Since $k$ is an even number, we have ${\rm deg}_x(h'_x)=k-2$. Therefore $(xh'_x)\in\mathcal{L}((n-2)\mathcal{O})$. From the strict triangle inequality of discrete valuation, if $v_{\mathcal{O}}(xh'_x-xg)\neq v_{\mathcal{O}}(xg)$, then
\[
	v_{\mathcal{O}}(xh'_x)=min\{v_{\mathcal{O}}(xh'_x-xg),v_{\mathcal{O}}(xg)\}. 
\]
Notice that $n\ge 8$. Then $n-2>k+1$ and we can obtain
\[
	(xh'_x-xg)_{\infty}\in\mathcal{L}((n-2)\mathcal{O}),x(h'_x-g)\neq0.
\]
It means that $xh'_x-xg$ has at most $n-2$ zeros, and $h'_x-g$ has at most $n-4$ zeros, which is contradictory to $(h'_x-g)(P_i)=0$ for $i=1,\ldots,n$. The proof is completed.
\end{proof}
\begin{remark}
In \cite{Chen-mds-elliptic} and \cite{Jin-self-dual}, the authors have given some constructions of self-dual codes from elliptic curves. Using their results, we can obtain self-dual MDS code from $C_L(D,G)$ over $\mathbb{F}_q$. Assume that ${\bf u}=\{u_1,\ldots,u_n\}$ where $(u_i)^2=\frac{1}{h'_x}(P_i)$ ($q$ is even), then $v_i(u_i)^{-1}=u_i$ for any $1\le i\le n$ and
\[
	C_L(D,G,{\bf u})^{\perp}=C_{\Omega}(D,G,{\bf u^{-1}})=C_L(D,G,{\bf u}).
\]
Thus the generalized AG code $C_L(D,G,{\bf u})$ is a self-dual MDS code. On the other hand, using the technique given in \cite{Chen-hull} and \cite{Chen-hullVaria}, we can obtain LCD MDS code from $C_L(D,G)$ over $\mathbb{F}_q$. Assume that
\[
	{\rm dim}(C_L(D,G)\cap C_L(D,G)^{\perp})=\ell.
\]
Then we can find a vector ${\bf\hat{u}}=\{\hat{u}_1,\ldots,\hat{u}_{n}\}\in(\mathbb{F}_q^*)^n$ satisfying
\begin{itemize}
	\item There are $\ell$ positions such that $\hat{u}_i^2\neq1$.
	\item Otherwise $\hat{u}_j=1$. 
\end{itemize}
Then we have
\[
{\rm dim}(C_L(D,G,{\bf\hat{u}})\cap C_L(D,G,{\bf\hat{u}})^{\perp})=0.
\]
Thus the generalized AG code $C_L(D,G,{\bf\hat{u}})$ is a LCD MDS code. Moreover, suppose that $t=\frac{k}{2}$ be an odd number, then we have
\[
	C_L(D,G)=C_L(D,t\mathcal{O}+tQ_1,{\bf\hat{v}})
\]
where $\bf\hat{v}=(v_1,\ldots,v_n)$ and $v_i=x^{\frac{1-t}{2}}(P_i)$. By \cite{Mesnager} we have that $C_L(D,t\mathcal{O}+tQ_1)$ is equivalent to an LCD code since $t$ is odd and $Q_1$ has order 2. Thus $C_L(D,G)$ is also equivalent to an LCD code.
\end{remark}

\begin{example}
Let $q=16$, and $\mathbb{F}^*_q=<\theta>$ where $\theta^4+\theta+1=0$. Consider the elliptic curve:
\[
	\mathcal{E} : y^2+xy=x^3+\theta^3x^2+\theta^3+1.
\]
We have $\mathcal{E}(\mathbb{F}_q)=22$, and $Q_1=(0,\theta^3+\theta+1)$. From Construction 1, we take 8 points with order 11 and the $x$-coordinate sets of $\{P_1,\ldots,P_8\}$ is
\[
	\{\theta^2 + 1,1,\theta,\theta^2+\theta+1\}.
\]
Define two divisors $D=P_1+\cdots+P_8$ and $G=3\mathcal{O}+Q_1$. Then the code $C_L(D,G)$ is an $[8,4]$ iso-dual MDS code and
\[
C_L(D,G)^{\perp}={\bf v}C_L(D,G)
\]
where ${\bf v}=(\frac{1}{h'_x}(P_1),\ldots,\frac{1}{h'_x}(P_8))$. Moreover, we have
\[
{\rm dim}(C_L(D,G)\cap C_L(D,G)^{\perp})=0,
\]
which means that $C_L(D,G)$ is an LCD MDS code. Let 
\[
	{\bf u}=(\theta^2,\theta^2,\theta+1,\theta+1,\theta,\theta,\theta^2+1,\theta^2+1).
\]
By using Magma,we have
\[
{\rm dim}(C_L(D,G,{\bf u})\cap C_L(D,G,{\bf u})^{\perp})=4.
\]
It means that $C_L(D,G,{\bf u})$ is self-dual. 
\end{example}

\begin{example}
Let $q=64$, and $\mathbb{F}^*_q=<\theta>$ where $\theta^6+\theta^4+\theta^3+\theta+1=0$. Consider the elliptic curve:
\[
\mathcal{E} : y^2+xy=x^3+\theta^3x^2+\theta^3+1.
\]
We have $\mathcal{E}(\mathbb{F}_q)=78$, and $Q_1=(0,\theta^5+\theta^4+\theta^3+\theta^2+\theta)$. From Construction 1, we take $n=36$ points with order 39 and denote by $\{P_1,\ldots,P_n\}$. Define two divisors $D=P_1+\cdots+P_n$ and $G=17\mathcal{O}+Q_1$. Then the code $C_L(D,G)$ is a $[36,18]$ iso-dual MDS code and
\[
C_L(D,G)^{\perp}={\bf v}C_L(D,G)
\]
where ${\bf v}=(\frac{1}{h'_x}(P_1),\ldots,\frac{1}{h'_x}(P_n))$. Moreover, we have
\[
{\rm dim}(C_L(D,G)\cap C_L(D,G)^{\perp})=2.
\]
Let ${\bf u}=(u_1,\ldots,u_n)$ with $u_i^2=\frac{1}{h'_x}(P_i)$. By using Magma,we have
\[
{\rm dim}(C_L(D,G,{\bf u})\cap C_L(D,G,{\bf u})^{\perp})=18.
\]
It means that $C_L(D,G,{\bf u})$ is self-dual. There are only two positions $i_1,i_2$ such that $v_{i_1}=v_{i_2}=1$. Thus let ${\bf\hat{u}}=(1,\ldots,u_{i_1},u_{i_2},\ldots,1)$ with $u_{i_1}=u_{i_2}=\theta$, we have
\[
{\rm dim}(C_L(D,G,{\bf\hat{u}})\cap C_L(D,G,{\bf\hat{u}})^{\perp})=0.
\]
It means that $C_L(D,G,{\bf\hat{u}})$ is LCD.
\end{example}

\begin{example}
Let $q=256$, and $\mathbb{F}^*_q=<\theta>$ where $\theta^8+\theta^4+\theta^3+\theta^2+1=0$. Consider the elliptic curve:
\[
\mathcal{E} : y^2+xy=x^3+\theta^5x^2+\theta^5+\theta^4+1.
\]
We have $\mathcal{E}(\mathbb{F}_q)=286$, and $Q_1=(0,\theta^7+\theta^5+\theta^3+\theta^2+\theta+1)$. From Construction 1, we take $n=140$ points with order 143 and denote by $\{P_1,\ldots,P_n\}$. Define two divisors $D=P_1+\cdots+P_n$ and $G=69\mathcal{O}+Q_1$. Then the code $C_L(D,G)$ is a $[140,70]$ iso-dual MDS code and
\[
C_L(D,G)^{\perp}={\bf v}C_L(D,G)
\]
where ${\bf v}=(\frac{1}{h'_x}(P_1),\ldots,\frac{1}{h'_x}(P_n))$. Moreover, we have
\[
{\rm dim}(C_L(D,G)\cap C_L(D,G)^{\perp})=2.
\]
Let ${\bf u}=(u_1,\ldots,u_n)$ with $u_i^2=\frac{1}{h'_x}(P_i)$. By using Magma,we have
\[
{\rm dim}(C_L(D,G,{\bf u})\cap C_L(D,G,{\bf u})^{\perp})=70.
\]
It means that $C_L(D,G,{\bf u})$ is self-dual. Let ${\bf\hat{u}}=(\hat{u}_1,\ldots,\hat{u}_n)$ such any two values are $\theta$ and others are all 1. Then we have
\[
{\rm dim}(C_L(D,G,{\bf\hat{u}})\cap C_L(D,G,{\bf\hat{u}})^{\perp})=0.
\]
It means that $C_L(D,G,{\bf\hat{u}})$ is LCD.
\end{example}
\begin{remark}
The length $n$ of codes given by Construction 1 is restricted by the number of points with odd order in $\mathcal{E}(\mathbb{F}_q)$. Thus it is restricted by the odd divisor of $\#\mathcal{E}(\mathbb{F}_q)$ by Lemma \ref{T.2.2}. Note that we also need that $k$ is even. We can easily check that if $q\ge16$ is a square, then
\[
	n\le \frac{q+2\sqrt{q}-8}{2}.
\]
Suppose that $q$ is not a square, since $n+1$ must be less than the greatest odd divisor of $\#\mathcal{E}(\mathbb{F}_q)$. When $q+1+\lfloor2\sqrt{q}\rfloor$ is an even number such that
\[
	\frac{q+1+\lfloor2\sqrt{q}\rfloor}{2}-1\equiv0\mod4.
\]
The length $n$ can be chosen to be $\frac{q+\lfloor2\sqrt{q}\rfloor-1}{2}$, and the smallest $q$ in this case is $q=2^{25}$. Otherwise we have
\[
	n\le \frac{q+\lfloor2\sqrt{q}\rfloor-3}{2}.
\]The following example is given for illustrations to show the latter case.
\end{remark}
\begin{example}
Let $q=32$, and $\mathbb{F}^*_q=<\theta>$ where $\theta^5+\theta+1=0$. Consider the elliptic curve:
\[
\mathcal{E} : y^2+xy=x^3+x^2+\theta^2+\theta.
\]
We have $\mathcal{E}(\mathbb{F}_q)=42$. From Construction 1 we can obtain a $[20,10]$ iso-dual MDS code and a $[20,10]$ self-dual MDS code. In this case, we have the code length $n=\frac{q+\lfloor2\sqrt{q}\rfloor-3}{2}$.
\end{example}

\section{MDS Elliptic Codes over Fields of Odd Characteristics}
In this section, we assume that $q$ is always odd. Then we will present a construction of iso-dual MDS codes over $\mathbb{F}_q$. We also consider the hull dimension of these codes.

{\it Construction 2:} Suppose that $\mathcal{E}$ is an elliptic curve defined over $\mathbb{F}_q$ with $\mathcal{E}[2]\subset\mathcal{E}(\mathbb{F}_q)$. Denote by 
\[
	\mathcal{E}[2] := \{\mathcal{O},Q_1=(\beta_1,0),Q_2=(\beta_2,0),Q_3=(\beta_3,0)\}.
\]
Let $k$ be an even number, and $\mathcal{P}=\{P_{\alpha_i}^+,P_{\alpha_i}^-| i=1,\ldots,\frac{k}{2}\}$ be $k$ points such that each point has odd order in $\mathcal{E}(\mathbb{F}_q)$ and 
\[
P_{\alpha_i}^+\oplus P_{\alpha_i}^-=\mathcal{O}
\]for $i=1,\ldots,k$. Denote by
\[
	\{P_1,\ldots,P_n\} := (Q_1\oplus\mathcal{P}) \cup (Q_2\oplus\mathcal{P})
\]
where 
\[
	Q_j\oplus\mathcal{P} := \{Q_j\oplus P_{\alpha_i}^+,Q_j\oplus P_{\alpha_i}^-| i=1,\ldots,\frac{k}{2}\}.
\]
Let $D=P_1+\cdots+P_n$ and $G=(k-1)\mathcal{O}+Q_1$ be two divisors, and let $C_L(D,G)$ be the AG code associated with $D$ and $G$. Then we have the following results.
\begin{theorem}\label{isomdsodd}
The code $C_L(D,G)$ given in Construction 2 is an $[n,k,n-k+1]$ iso-dual code.
\end{theorem}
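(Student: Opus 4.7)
The plan is to mirror the even-characteristic argument of Theorem \ref{isomdseven}, but with the crucial adaptation that over $\mathbb{F}_q$ with $q$ odd one has $(dx)=(y)=Q_1+Q_2+Q_3-3\mathcal{O}$ rather than $(dx)=(x)$. This extra contribution is the whole reason for the two-coset structure $(Q_1\oplus\mathcal{P})\cup(Q_2\oplus\mathcal{P})$ of $D$ in Construction 2: the stray $2$-torsion part will be absorbed via the identity $Q_1\oplus Q_2\oplus Q_3=\mathcal{O}$.

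For the MDS part, I would apply Lemma \ref{mds-condition} to $G=(k-1)\mathcal{O}+Q_1$, so the condition reduces to showing that no $k$-subset of $\mathrm{supp}(D)$ sums to $Q_1$. Let $s_1$ and $s_2=k-s_1$ count how many points are chosen from each coset; the corresponding sum splits as $[s_1]Q_1\oplus[s_2]Q_2$ plus an element of the odd-order subgroup generated by $\mathcal{P}$. Since $k$ is even, $s_1$ and $s_2$ share parity, so the $2$-torsion contribution is either $\mathcal{O}$ or $Q_3$, and neither combined with an odd-order element can equal the order-$2$ point $Q_1$.

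For the iso-dual part, I would first observe that $Q_j\oplus P_{\alpha_i}^{+}$ and $Q_j\oplus P_{\alpha_i}^{-}$ are inverses in the group (because $-Q_j=Q_j$) and hence share an $x$-coordinate $\alpha_i^{j}$. Setting $h=\prod_{i=1}^{k/2}(x-\alpha_i^{1})(x-\alpha_i^{2})$ gives $(h)=D-n\mathcal{O}$ with $v_{P_i}(h)=1$, so Proposition \ref{dualAG} applies and a direct computation using $n=2k$ and Lemma \ref{dx} yields
\[
	D-G+(dh)-(h)=G+(h'_x)-\mathcal{O}-Q_1+Q_2+Q_3.
\]
The decisive step is to verify that the trailing correction is a principal divisor: its degree vanishes because $\deg((h'_x))=0$, and its image under $\mathrm{sum}$ is $-Q_1\oplus Q_2\oplus Q_3=Q_1\oplus Q_2\oplus Q_3=\mathcal{O}$ via Abel-Jacobi applied to $(h'_x)$. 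Lemma \ref{abel-jacobi} then produces a function $g$ realising this divisor; an explicit check shows that $g=h'_x\cdot y/(x-\beta_1)$ works, since $(x-\beta_1)=2Q_1-2\mathcal{O}$ at the $2$-torsion point $Q_1$. Consequently $C_L(D,G)^\perp=C_L(D,G+(g))=C_L(D,G,\mathbf{v})$ with $v_i=g(P_i)^{-1}$, and every $v_i$ is well-defined and nonzero because the $P_i$, being odd-order translates of $2$-torsion, avoid $\mathcal{E}[2]$ entirely.

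The main conceptual obstacle, compared with Construction 1, is precisely the $(y)$ factor introduced by $(dx)$ in odd characteristic: without the two different $2$-torsion translates appearing in $D$, the three nonzero $2$-torsion points in the correction divisor would fail to cancel, and the iso-duality identity could not be arranged while simultaneously keeping $C_L(D,G)$ MDS of length $n=2k$. Once the two-coset design is in place, Abel-Jacobi closes the argument automatically, and the explicit $g$ merely certifies the divisor identity.
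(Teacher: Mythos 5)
Your proposal is correct and follows essentially the same route as the paper's proof: the same parity argument on the two cosets for the MDS property (via Lemma \ref{mds-condition}/Corollary \ref{mdsmul}), the same function $h=\prod_{i}(x-\alpha_i^1)(x-\alpha_i^2)$ with $(h)=D-n\mathcal{O}$, and the same final divisor identity yielding $v_i=\frac{x-\beta_1}{h'_xy}(P_i)$. The only difference is your intermediate appeal to Abel--Jacobi to certify that the correction divisor is principal, which is harmless but redundant once you exhibit the explicit function $h'_xy/(x-\beta_1)$, exactly as the paper does directly.
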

\begin{proof}
Firstly, we prove that $C_L(D,G)$ is an MDS code. For any partition of $k=k_1+k_2$, since $k$ is an even number, the two integers $k_1$ and $k_2$ are either both even or both odd. It follows that
\[
	[k_1]Q_1\oplus[k_2]Q_2=\mathcal{O}\ or\ Q_3,
\]
which means
\[
	[k_1]Q_1\oplus[k_2]Q_2\oplus[2-1]Q_1\neq \mathcal{O}.
\]
From Corollary \ref{mdsmul} we can obtain that $C_L(D,G)$ is MDS. 

It remains to prove that $C_L(D,G)$ is an iso-dual code. For any $i=1,\ldots,\frac{k}{2}$ and $j=1,2$,
\[
	(Q_j\oplus P_{\alpha_i}^+)\oplus(Q_j\oplus P_{\alpha_i}^-)=\mathcal{O}.
\]
Therefore if $Q_j\oplus P_{\alpha_i}^+=(\alpha_i^j,\beta_i^j)$, then $(Q_j\oplus P_{\alpha_i}^-)=(\alpha_i^j,-\beta_i^j)$. Let
\[
	h=\prod_{i=1}^{\frac{k}{2}}(x-\alpha_i^1)(x-\alpha_i^2).
\]
We have
\[
	(h)=P_1+\cdots+P_n-n\mathcal{O}.
\]
Thus $C_L(D,G)^{\perp}=C_L(D,D-G+(\eta))$ with $(\eta)=(dh)-(h)$ by \cite[Proposition 8.1.2]{Stich}. Since $q$ is odd, then 
\[
	(dx) = -2(x)_{\infty}+\mathcal{O}+Q_1+Q_2+Q_3=(y),
\]
and
\begin{align*}
	D-G+(\eta)&=D-(k-1)\mathcal{O}-Q_1+(h'_x)+(y)-D+n\mathcal{O}\\
 	&=(k+1)\mathcal{O}-Q_1+(h'_x)+(y)+2Q_1-2Q_1\\
	&=(k-1)\mathcal{O}+Q_1-(x-\beta_1)+(h'_x)+(y).
\end{align*}
It follows that
\begin{align*}
	C_L(D,D-G+(\eta))&=C_L(D,G-(x-\beta_1)+(h'_x)+(y))\\
	&=C_L(D,G,{\bf v})
\end{align*}
where ${\bf v}=\{v_1,\ldots,v_n\}$ and $v_i=\frac{x-\beta_1}{h'_xy}(P_i)$. The proof is completed.
\end{proof}
\begin{remark}
Suppose that $\#\mathcal{E}(\mathbb{F}_q)=4r$ where $r$ is odd. Then the length $n$ of any code given in Construction 2 satisfies $n\le 2r-2$. Moreover, if $\mathcal{E}$ is maximal over $\mathbb{F}_q$ and $q$ is square, then we have $n\le\frac{q+2\sqrt{q}-3}{2}$. If $q$ is not square, since $\#\mathcal{E}[2]=4$ and $q$ odd, then we also have $n\le\frac{q+\lfloor2\sqrt{q}\rfloor-3}{2}$.
\end{remark}
We will present some examples of Construction 2 following the result below.
\begin{corollary}
Suppose that $C_L(D,G)$ is given in Construction 2 with parameters $[n,k,n-k+1]$. Then we have ${\rm dim}(C_L(D,G)\cap C_L(D,G)^{\perp})\le k-1$. 
\end{corollary}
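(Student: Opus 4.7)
The plan is to mirror the template of Corollary~\ref{hulleven} by assuming for contradiction that ${\rm dim}\bigl(C_L(D,G)\cap C_L(D,G)^{\perp}\bigr)=k$ and then specializing to the function $f=1$. By Theorem~\ref{isomdsodd}, $C_L(D,G)^{\perp}=C_L(D,G,{\bf v})$ with $v_i=\frac{x-\beta_1}{h'_xy}(P_i)$, and under the dimension assumption this forces $C_L(D,G)=C_L(D,G,{\bf v})$. Applied to $f=1$, it produces a $g\in\mathcal{L}(G)$ for which
\[
F:=h'_xy-(x-\beta_1)g
\]
vanishes at every $P_i$, and the goal is to show that Lemma~\ref{abel-jacobi} forbids a nonzero $F$ with these zeros.

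The first step is to pin down the divisor of poles of $F$. Since $q$ is odd, $h'_x$ has $x$-degree exactly $k-1$, so $(h'_xy)_\infty=(2k+1)\mathcal{O}$. Using $(x-\beta_1)=2Q_1-2\mathcal{O}$ together with $g\in\mathcal{L}((k-1)\mathcal{O}+Q_1)$, I would check that $(x-\beta_1)g$ has a pole of order at most $k+1$ at $\mathcal{O}$, is regular away from $\mathcal{O}$, and in fact has a zero at $Q_1$. Because $k+1<2k+1$, no cancellation can occur at $\mathcal{O}$, so $F\in\mathcal{L}\bigl((2k+1)\mathcal{O}\bigr)$ with pole of exact order $2k+1$ at $\mathcal{O}$; in particular $F\neq 0$.

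The decisive step is producing one more zero of $F$ beyond the prescribed $n$. The $x$-coordinates $\alpha_i^1,\alpha_i^2$ of the $P_i$ all differ from $\beta_1$ (since each $P_i\neq Q_1$), so $h'_x(Q_1)\neq 0$, while $v_{Q_1}(y)=1$. Combined with $v_{Q_1}((x-\beta_1)g)\ge 1$ from the previous step, this forces $F(Q_1)=0$, hence $F\in\mathcal{L}\bigl((2k+1)\mathcal{O}-D-Q_1\bigr)$. The residual divisor has degree $0$; because the $P_i$ partition into involutive pairs with $P_{\alpha_i}^{+}\oplus P_{\alpha_i}^{-}=\mathcal{O}$, one obtains ${\rm sum}(D)=\mathcal{O}$ and therefore
\[
{\rm sum}\bigl((2k+1)\mathcal{O}-D-Q_1\bigr)=-Q_1=Q_1\neq\mathcal{O}.
\]
By Lemma~\ref{abel-jacobi} the space $\mathcal{L}\bigl((2k+1)\mathcal{O}-D-Q_1\bigr)$ is trivial, contradicting $F\neq 0$.

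The main obstacle is the sharp pole and zero bookkeeping at the distinguished places $\mathcal{O}$ and $Q_1$: one must confirm that $(x-\beta_1)$ absorbs the potential simple pole of $g$ at $Q_1$ into a genuine zero of $F$, and that $(x-\beta_1)g$ cannot influence the leading term of $h'_xy$ at $\mathcal{O}$. Both rely on $q$ being odd (so $k\neq 0$ in $\mathbb{F}_q$ and $y$ is a uniformizer at $Q_1$). Once these verifications are in hand, the extra zero at $Q_1$ is the essential input that promotes the nonzero $F$ into an element of a trivial Riemann-Roch space, yielding ${\rm dim}\bigl(C_L(D,G)\cap C_L(D,G)^{\perp}\bigr)\le k-1$.
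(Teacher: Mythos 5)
Your overall strategy is genuinely different from the paper's. The paper assumes the hull has dimension $k$, invokes the MDS weight enumerator to pick a \emph{minimum-weight} codeword $f$ with $k-1$ prescribed zeros among the $P_i$, shows the corresponding $g$ is a scalar multiple of $f$, and then derives a contradiction from a pair $P_\alpha^{\pm}$ sharing an $x$-coordinate at which $h'_xy-a(x-\beta_1)$ must vanish simultaneously. Your route — specialize to $f=1$, bound the pole divisor of $F=h'_xy-(x-\beta_1)g$, observe the extra forced zero at $Q_1$, and kill $F$ with Lemma~\ref{abel-jacobi} applied to the degree-zero divisor $(2k+1)\mathcal{O}-D-Q_1$ whose group sum is $Q_1\neq\mathcal{O}$ — is cleaner where it works, and the bookkeeping at $Q_1$ and $\mathcal{O}$ is essentially right (note only that $F(Q_1)=0$ already follows from $y(Q_1)=0$; your claim that $h'_x(\beta_1)\neq0$ does not follow from $\beta_1\neq\alpha_i^j$, which only gives $h(\beta_1)\neq0$, but this is harmless).

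The genuine gap is the assertion that $\deg_x h'_x=k-1$, justified by ``$q$ odd, so $k\neq0$ in $\mathbb{F}_q$.'' That implication is false: $q$ odd and $k$ even do not prevent $p\mid k$ (the paper's own Example with $q=49$, $k=14$ has $7\mid k$), and when $p\mid k$ the leading term $kx^{k-1}$ of $h'_x$ vanishes. This matters because your contradiction needs $F\neq0$, and $F=0$ precisely when $g=h'_xy/(x-\beta_1)$; that function lies in $\mathcal{L}((k-1)\mathcal{O}+Q_1)$ exactly when $\deg_x h'_x\le \frac{k}{2}-1$ (its pole order at $\mathcal{O}$ is $2\deg_x h'_x+1$). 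In that degenerate case the all-ones vector genuinely lies in $C_L(D,G)\cap C_L(D,G)^{\perp}$, so the test function $f=1$ yields no contradiction at all — the failure is not merely one of justification. Since $h$ has $k$ distinct roots one knows $h'_x\neq0$, but nothing in your argument (or in the construction, as far as you have used it) rules out $\deg_x h'_x\le\frac{k}{2}-1$ when $p\mid k$. To close the gap you must either prove $\deg_x h'_x\ge\frac{k}{2}$ for the specific $h=\prod_i(x-\alpha_i^1)(x-\alpha_i^2)$ of Construction~2, or, when $f=1$ fails, switch to a different test function $f$ (as the paper does with a minimum-weight codeword) for which the analogous $F$ is provably nonzero.
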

\begin{proof}
With the same notations in Construction 2 and Theorem \ref{isomdsodd}, assume that
\[
	{\rm dim}(C_L(D,G)\cap C_L(D,G)^{\perp})= k,
\]
then $C_L(D,G)$ is self-dual. Therefore we can obtain that for any $f\in\mathcal{L}(G)$, there exists a function $g\in\mathcal{L}(G)$ such that
\[
	f(P_i)=\frac{x-\beta_1}{h'y}g(P_i),
\]
i.e.,
\[
	(h'yf-(x-\beta_1)g)(P_i)=0
\]
for $i=1,\ldots,n$. Note that $f(P_i)=0$ if and only if $g(P_i)=0$. We consider the minimum Hamming weight code. By \cite[Corollary 7.4.2]{Huffman-fecc} we have
\[
	A_{n-k+1}=(q-1)\binom{n}{k-1}.
\]
Thus we can find a function $f$ with $k-1$ zeros $P_{i_1},\ldots,P_{i_{k-1}}$ among $\{P_1,\ldots,P_n\}$ such that
\[
	P_{i_1}\oplus\cdots\oplus P_{i_{k-1}}\neq\mathcal{O}
\]
and $(f)_{\infty}=(k-1)\mathcal{O}+Q_1$. From Construction 2, we also have
\[
	P_{i_1}\oplus\cdots\oplus P_{i_{k-1}}\neq Q_1.
\]
Therefore the corresponding function $g$ of $f$ must satisfy $(g)_{\infty}=(k-1)\mathcal{O}+Q_1$. Then we can obtain that
\[
	h'yf-(x-\beta_1)g=f(h'y-a(x-\beta_1))
\]
where $a=\frac{f}{g}\in\mathbb{F}_q$. Since $f$ has $k-1$ zeros among $\{P_1,\ldots,P_n\}$, the function $h'y-a(x-\beta_1)$ must have at least $k+1$ zeros in $\{P_1,\ldots,P_n\}$. Then we can find a pair of points $P_{\alpha}^+=(\alpha,\beta_{\alpha})$ and $P_{\alpha}^-=(\alpha,-\beta_{\alpha})$ such that
\begin{align*}
(h'y-a(x-\beta_1))(P_{\alpha}^+)&=(h'y-a(x-\beta_1))(P_{\alpha}^-)=0,\\
f(P_{\alpha}^+)\neq0&,f(P_{\alpha}^-)\neq0.
\end{align*}
It follows that 
\begin{align*}
	&h'(\alpha)y(\beta_{\alpha})-a(\alpha-\beta_1)\\
	=&-h'(\alpha)y(\beta_{\alpha})-a(\alpha-\beta_1)\\
	=&0,
\end{align*}
which means $h'(\alpha)y(\beta_{\alpha})=0$ and $\alpha-\beta_1=0$. From the construction and the fact that $v_{P_i}(h)=1$ for $i=1,\ldots,n$, we have $h'y(P_i)\neq0$ and $(x-\beta_1)(P_i)\neq0$. Then we get a contradiction and the proof is completed.
\end{proof}
\begin{example}
Let $q=5^2$, and $\mathbb{F}_q^*=<\theta>$ where $\theta^2-\theta+2=0$. Consider the elliptic curve
\[
\mathcal{E}:y^2=x^3+1
\]
defined over $\mathbb{F}_q$. We have $\#\mathcal{E}(\mathbb{F}_q)=36$ and $\mathcal{E}(\mathbb{F}_q)\simeq \mathcal{E}[2]\times \mathcal{E}[3]$. It follows that
\[
	Q_1=(-1,0),Q_2=(2\theta+2,0),Q_3=(3\theta-1,0).
\]
From Construction 2, we take the point set
\[
	\{Q_1\oplus P|P\in \mathcal{E}[3]\backslash\{\mathcal{O}\}\}\cup\{Q_2\oplus P|P\in \mathcal{E}[3]\backslash\{\mathcal{O}\}\}.
\]
Denote the point set by $\{P_1,\ldots,P_n\}$ with $n=16$. Define two divisors $D=P_1+\cdots+P_n$ and $G=7\mathcal{O}+Q_1$. Then the code $C_L(D,G)$ is a $[16,8]$ iso-dual MDS code and
\[
	C_L(D,G)^{\perp}={\bf v}C_L(D,G)
\]
where ${\bf v}=(\frac{x+1}{h'_xy}(P_1),\ldots,\frac{x+1}{h'_xy}(P_n))$. Moreover, we have
\[
	{\rm dim}(C_L(D,G)\cap C_L(D,G)^{\perp})=0,
\]
which means that $C_L(D,G)$ is an LCD MDS code. By using Magma, for any vector ${\bf u}\in(\mathbb{F}_q^*)^n$ we have
\[
	{\rm dim}(C_L(D,G,{\bf u})\cap C_L(D,G,{\bf u})^{\perp})\le2.
\]
\end{example}

\begin{example}
Let $q=7^2$, and $\mathbb{F}_q^*=<\theta>$ where $\theta^2-\theta+3=0$. Consider the elliptic curve 
\[
\mathcal{E}:y^2=x^3+x+3
\]
defined over $\mathbb{F}_q$. We have $\#\mathcal{E}(\mathbb{F}_q)=60$ and $\mathcal{E}[2]\subset\mathcal{E}(\mathbb{F}_q)$. It follows that
\[
	Q_1=(5,0),Q_2=(2\theta,0),Q_3=(5\theta+2,0).
\]
From Construction 2, we take the point set
\[
	\{Q_1\oplus P| P\neq\mathcal{O},[15]P=\mathcal{O}\}\cup\{Q_2\oplus P| P\neq\mathcal{O},[15]P=\mathcal{O}\}.
\]
Denote the point set by $\{P_1,\ldots,P_n\}$ with $n=28$. Define two divisors $D=P_1+\cdots+P_n$ and $G=13\mathcal{O}+Q_1$. Then the code $C_L(D,G)$ is a $[28,14]$ iso-dual MDS code and
\[
	C_L(D,G)^{\perp}={\bf v}C_L(D,G)
\]
where ${\bf v}=(\frac{x+2}{h'_xy}(P_1),\ldots,\frac{x+2}{h'_xy}(P_n))$. Moreover, we have
\[
	{\rm dim}(C_L(D,G)\cap C_L(D,G)^{\perp})=0,
\]
which means that $C_L(D,G)$ is an LCD MDS code.
\end{example}

\begin{example}
Let $q=17^2$, and $\mathbb{F}_q^*=<\theta>$ where $\theta^2-\theta+3=0$. Consider the elliptic curve 
\[
\mathcal{E}:y^2=x^3+1
\]
defined over $\mathbb{F}_q$. We have $\#\mathcal{E}(\mathbb{F}_q)=324$ and $\mathcal{E}(\mathbb{F}_q)\simeq \mathcal{E}[2]\times \mathcal{E}[9]$. It follows that
\[
	Q_1=(-1,0),Q_2=(5\theta-2,0),Q_3=(-5\theta+3,0).
\]
From Construction 2, we take the point set
\[
	\{Q_2\oplus P|P\in \mathcal{E}[9]\backslash\{\mathcal{O}\}\}\cup\{Q_3\oplus P|P\in \mathcal{E}[9]\backslash\{\mathcal{O}\}\}.
\]
Denote the point set by $\{P_1,\ldots,P_n\}$ with $n=160$. Define two divisors $D=P_1+\cdots+P_n$ and $G=79*\mathcal{O}+Q_2$. Then the code $C_L(D,G)$ is a $[160,80]$ iso-dual MDS code and
\[
	C_L(D,G)^{\perp}={\bf v}C_L(D,G)
\]
where ${\bf v}=(\frac{x-5\theta+3}{h'_xy}(P_1),\ldots,\frac{x-5\theta+3}{h'_xy}(P_n))$. Moreover, we have
\[
	{\rm dim}(C_L(D,G)\cap C_L(D,G)^{\perp})=0,
\]
which means that $C_L(D,G)$ is an LCD MDS code.
\end{example}

\section{Application in construction of EAQECCs}
In this section, we will briefly review some definitions and results of EAQECCs. We mainly refer to \cite{ChenB} and \cite{Pereira}; see also in \cite{Brun} and \cite{Galindo}.
\begin{definition}
An entanglement-assisted quantum error correcting code (EAQECC) $\mathcal{Q}$ with parameters $[[n,k,d;c]]_q$ is an $q^k$-dimension subspace of $\mathbb{C}^{q^n}$ with minimum distance $d$ that consumes $c$ pre-shared copies of maximally entangled states. The rate of an EAQECC is given by $\frac{k}{n}$, the relative distance by $\frac{d}{n}$, and the entangled-assisted rate by $\frac{c}{n}$. Lastly, an EAQECC is said to consume or require maximal entanglement when $c=n-k$.
\end{definition}
The Singleton-type bound of an $[[n,k,d;c]]_q$ quantum codes was shown in \cite{Lai} (see also \cite{Grassl}). For $d\le\frac{n+2}{2}$, we have
\[
	2d\le n-k+c+2.
\]
An EAQECC is called an MDS EAQECC if the equality holds. The following results are useful in constructing MDS EAQECCs:
\begin{lemma}\cite[Lemma 16]{ChenB}\label{eaqeccs}
Let $C$ be an $[n,k,d]$ linear code over $\mathbb{F}_q$, and let $C^{\perp}$ be its dual with parameters $[n,n-k,d^{\perp}]$. Then there exists an EAQECC with parameters
\[
[[n,k-{\rm dim}(Hull(C)),d;n-k-{\rm dim}(Hull(C))]]_q.
\]
Moreover, if $C$ is MDS with $d\le\frac{n+2}{2}$, then the EAQECC is also MDS.
\end{lemma}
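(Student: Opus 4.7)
The plan is to invoke the Brun--Devetak--Hsieh stabilizer construction of EAQECCs in its $\mathbb{F}_q$-linear form (see \cite{Brun,Galindo}), which associates to any classical $[n,k]_q$ code with parity-check matrix $H$ an entanglement-assisted stabilizer code of parameters $[[n,\,2k-n+c,\,d;\,c]]_q$, where $c={\rm rank}(HH^T)$ counts the number of ebits required to turn the non self-orthogonal set of generators $H$ into a genuine commuting stabilizer after a symplectic extension. I would use this as a black box; the substance of the proof is then reduced to re-expressing $c$ in terms of the hull dimension.

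The central linear-algebraic identity is
\[
{\rm rank}(HH^T)=(n-k)-\dim(Hull(C)).
\]
To establish it, note that $H$ is simultaneously a generator matrix of $C^{\perp}$. Hence $v\in\mathbb{F}_q^{n-k}$ lies in $\ker(HH^T)$ if and only if $vH\in C^{\perp}$ is orthogonal to every row of $H$, if and only if $vH\in(C^{\perp})^{\perp}=C$, if and only if $vH\in C\cap C^{\perp}=Hull(C)$. Since $H$ has full row rank this yields $\dim\ker(HH^T)=\dim(Hull(C))$, and the displayed identity follows.

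Substituting $c=n-k-\dim(Hull(C))$ into the Brun--Devetak--Hsieh parameters recovers exactly $[[n,\,k-\dim(Hull(C)),\,d;\,n-k-\dim(Hull(C))]]_q$, as claimed; the distance is inherited from the classical minimum distance $d$, because the errors detectable by the constructed stabilizer are represented by nonzero cosets of $Hull(C)$ inside $C$ and inside $C^{\perp}$, whose weights are at least $d$ (see \cite{Brun,Galindo,ChenB}). For the MDS assertion, set $h=\dim(Hull(C))$. The EA-Singleton bound then reads
\[
2d\le n-(k-h)+(n-k-h)+2=2(n-k)+2,
\]
and this is an equality precisely when $d=n-k+1$, that is, when $C$ is classically MDS; hence, under the hypothesis $d\le(n+2)/2$, the resulting EAQECC is MDS.

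The main obstacle is not the above linear-algebraic identity, which is essentially one line, but the symplectic block decomposition of $H$ underpinning the Brun--Devetak--Hsieh construction; in a self-contained proof this decomposition would need to be carried out explicitly over $\mathbb{F}_q$ (delicate in characteristic two, where $HH^T$ may be alternating), but for this lemma I would simply cite the construction from \cite{Brun,Galindo} and concentrate on the hull-dimension bookkeeping above.
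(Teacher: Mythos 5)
Your argument is correct: the identity $\mathrm{rank}(HH^T)=(n-k)-\dim(Hull(C))$ for a parity-check matrix $H$ of $C$, combined with the Wilde--Brun parameters $[[n,\,2k-n+c,\,d;\,c]]_q$ with $c=\mathrm{rank}(HH^T)$, yields exactly the stated EAQECC, and your EA-Singleton computation $2d\le 2(n-k)+2$ under $d\le\frac{n+2}{2}$ correctly reduces the MDS claim to $d=n-k+1$. The paper itself gives no proof of this lemma (it is quoted verbatim from Lemma 16 of \cite{ChenB}), and your derivation is precisely the standard argument used in that reference, so the two approaches coincide.
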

From Constructions 1 and 2, we can obtain the following results.
\begin{corollary}\label{mdseaqeccs}
Let $q$ be even. Then there exists an MDS EAQECC with parameters $[[n,c,d;c]]_q$ where $n\le\frac{q+\lfloor2\sqrt{q}\rfloor-1}{2}$ and $c\ge 1$.
\end{corollary}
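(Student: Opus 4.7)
The plan is to combine the iso-dual MDS codes from Construction 1 with the EAQECC recipe provided by Lemma \ref{eaqeccs}. By Theorem \ref{isomdseven}, Construction 1 yields an $[n, k, n-k+1]$ iso-dual MDS code $C = C_L(D, G)$ with $n = 2k$ (and $k$ even). As discussed in Remark IV.2, the length $n$ can be pushed up to $\frac{q + \lfloor 2\sqrt{q} \rfloor - 1}{2}$ in the favorable parameter range. Because the minimum distance is $d = n - k + 1 = k + 1 = \frac{n+2}{2}$, the hypothesis $d \le \frac{n+2}{2}$ in Lemma \ref{eaqeccs} holds with equality, so the resulting EAQECC is automatically MDS.

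Write $\ell := {\rm dim}(Hull(C))$. By Corollary \ref{hulleven}, whenever $n \ge 8$ we have $\ell \le k - 1$. Feeding the parameters of $C$ into Lemma \ref{eaqeccs} then produces an MDS EAQECC with parameters
\[
	[[\,n,\; k - \ell,\; k+1;\; n - k - \ell\,]]_q.
\]
Since $n - k = k$, both the dimension and the entanglement parameter collapse to the common value $k - \ell$. Setting $c := k - \ell \ge 1$ gives an MDS EAQECC of the form $[[n, c, d; c]]_q$, which is exactly what the statement asserts.

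The main obstacle, should one want to strengthen the statement so that every admissible $c \ge 1$ (rather than the single value $k - \ell$) is realized, lies in controlling the hull dimension on the nose. This can be addressed by working with the generalized AG codes $C_L(D, G, \hat{u})$ introduced in Remark IV.1: perturbing the multiplier $\hat{u}$ one coordinate at a time changes ${\rm dim}(Hull)$ by at most one per step, so interpolating between the iso-dual choice of $\hat{u}$ (giving $c = k - \ell$) and the LCD choice (giving $c = k$) sweeps out all intermediate values of $c$. Each such $\hat{u}$ yields a generalized AG code that remains MDS, and Lemma \ref{eaqeccs} applies verbatim to each. The verification that MDSness is preserved under this perturbation is the only step that requires genuine care, but it follows from the fact that generalized AG codes differ from $C$ only by a diagonal equivalence, which preserves minimum distance.
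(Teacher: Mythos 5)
Your argument is essentially the paper's own proof: take the iso-dual MDS code from Construction 1, note that $d=k+1=\frac{n+2}{2}$ and $n-k=k$, apply Corollary \ref{hulleven} to get $\dim(Hull(C))\le k-1$ hence $c=k-\dim(Hull(C))\ge 1$, and feed this into Lemma \ref{eaqeccs} together with the length bound from Remark IV.2 (indeed, your $c\ge 1$ corrects what appears to be a typo, ``$c\le 1$'', in the paper's proof). Your final paragraph about sweeping out all intermediate values of $c$ via hull variation is not needed for the statement as written, which only asserts existence of one such code, so it does not affect correctness.
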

\begin{proof}
Let $C$ be an $[n,k,d]$ iso-dual MDS code constructed by Construction 1. Note that $d=n-k+1$ and $n=2k$. Thus $d=k+1=\frac{n+2}{2}$. From Corollary \ref{hulleven} we have ${\rm dim}(Hull(C))\le k-1$. Using Lemma \ref{eaqeccs}, we can obtain an MDS EAQECC with parameters $[[n,c,d;c]]_q$ where $c\le 1$. With Remark IV.2, we have $n\le\frac{q+\lfloor2\sqrt{q}\rfloor-1}{2}$.
\end{proof}
\begin{corollary}
Let $q$ be odd. Then there exists an MDS EAQECC with parameters $[[n,c,d;c]]_q$ where $n\le\frac{q+\lfloor2\sqrt{q}\rfloor-3}{2}$ and $c\ge 1$.
\end{corollary}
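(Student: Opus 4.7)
The plan is to mirror the even-characteristic argument of Corollary VI.1, replacing Construction 1 with Construction 2 and the corresponding hull-dimension and length bounds from Section V. First I would take an $[n,k,d]$ iso-dual MDS code $C = C_L(D,G)$ produced by Construction 2 over $\mathbb{F}_q$ with $q$ odd. By Theorem \ref{isomdsodd}, such a code satisfies $d = n-k+1$, and by the Construction the length satisfies $n = 2k$, so that $d = k+1 = \frac{n+2}{2}$, which is exactly the regime in which the Singleton-type bound of Lemma \ref{eaqeccs} yields MDS EAQECCs.

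Next I would invoke the hull-dimension bound from the corollary following Theorem \ref{isomdsodd}, namely $\dim(\mathrm{Hull}(C)) \le k-1$. Plugging this into Lemma \ref{eaqeccs}, the resulting EAQECC has parameters
\[
\bigl[\bigl[n,\ k-\dim(\mathrm{Hull}(C)),\ d;\ n-k-\dim(\mathrm{Hull}(C))\bigr]\bigr]_q,
\]
and since $n = 2k$, both the dimension entry and the entanglement entry are equal to $k-\dim(\mathrm{Hull}(C)) \ge 1$. Setting $c := k - \dim(\mathrm{Hull}(C))$, this produces an MDS EAQECC with parameters $[[n,c,d;c]]_q$ satisfying $c \ge 1$, as required.

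Finally I would invoke the length bound from Remark V.1: since $q$ is odd and $\#\mathcal{E}[2] = 4$, the construction yields codes of length at most $\tfrac{q+\lfloor 2\sqrt{q}\rfloor-3}{2}$, giving exactly the claimed bound on $n$. The main obstacle, if any, is a bookkeeping one: verifying that all parities and divisibility conditions needed by Construction 2 (namely $k$ even and enough odd-order points available on $\mathcal{E}(\mathbb{F}_q)$) can be met simultaneously for every target length up to the stated bound; but this has already been handled in Section V, so the corollary reduces to a clean application of Lemma \ref{eaqeccs} to the code produced by Construction 2.
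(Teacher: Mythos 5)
Your proposal is correct and matches the paper's own argument, which simply says the proof is ``similar to that of Corollary \ref{mdseaqeccs}'' (the even case) and then cites Remark V.1 for the length bound; you have just written out explicitly the steps the paper leaves implicit, namely $n=2k$, $d=k+1=\frac{n+2}{2}$, the hull bound $\dim(\mathrm{Hull}(C))\le k-1$ from the corollary after Theorem \ref{isomdsodd}, and the application of Lemma \ref{eaqeccs}. As a minor aside, your conclusion $c\ge 1$ is the correct reading (the paper's even-case proof contains a typo ``$c\le 1$'').
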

\begin{proof}
The proof of MDS and $c\ge 1$ is similar to that of Corollary \ref{mdseaqeccs}. With Remark V.1, we have $n\le\frac{q+\lfloor2\sqrt{q}\rfloor-3}{2}$.
\end{proof}
\begin{example}
Using examples in Section IV and Section V, we have the following MDS EAQECCs:
\begin{itemize}
\item[1.] Let $q=16$, then there exists an MDS EAQECC with parameters $[[8,4,5;4]]_q$.
\item[2.] Let $q=32$, then there exists an MDS EAQECC with parameters $[[20,8,11;8]]_q$.
\item[3.] Let $q=64$, then there exists an MDS EAQECC with parameters $[[36,16,19;16]]_q$, and an MDS EAQECC with parameters $[[36,18,19;18]]_q$.
\item[4.] Let $q=256$, then there exists an MDS EAQECC with parameters $[[140,68,71;68]]_q$, and an MDS EAQECC with parameters $[[140,70,71;70]]_q$.
\item[5.] Let $q=25$, then there exists an MDS EAQECC with parameters $[[16,8,9;8]]_q$, and an MDS EAQECC with parameters $[[16,6,9;6]]_q$.
\item[6.] Let $q=49$, then there exists an MDS EAQECC with parameters $[[28,14,15;14]]_q$.
\item[7.] Let $q=289$, then there exists an MDS EAQECC with parameters $[[160,80,81;80]]_q$.
\end{itemize}
\end{example}
\section{Conclusion}
In this paper, we considered iso-dual MDS codes derived from elliptic curves. We first showed that the conjecture of maximal length MDS elliptic codes given by Han and Ren holds for iso-dual MDS elliptic codes. Then we provided two constructions of iso-dual MDS codes from elliptic curves. The length $n$ of codes constructed by these methods satisfies
\[
n\le\frac{q+\lfloor2\sqrt{q}\rfloor-1}{2}
\]
when $q$ is even and
\[
n\le\frac{q+\lfloor2\sqrt{q}\rfloor-3}{2}
\]
when $q$ is odd.

In the case where $q$ is even, an iso-dual code is equivalent to a self-dual code, and also equivalent to an LCD code. Therefore, the iso-dual code is useful for considering the hull-variation problem of linear code. However, constructing long self-dual MDS codes over finite fields with odd characteristics is still challenging. Similar to the even case, we can select points such that their values under $\frac{x-\beta_1}{h'_xy}$ are squares. But there is no explicit method available to find these points. Thus in $\mathbb{F}_q$ when $q$ is odd, the hull-variation problem remains an open problem, especially concerning increasing hull dimension through equivalent codes.

We also explored the hull dimension of the iso-dual codes presented in this paper. Subsequently, we applied them to the construction of EAQECCs, and obtained some MDS EAQECCs. It is difficult to claim that these EAQECCs have new parameters although the idea of iso-dual MDS elliptic codes in the previous literature has not been discussed. 

From the examples given in this paper, it appears that these iso-dual MDS codes have low hull dimensions. In the case of $q$ odd, these iso-dual MDS codes might even be LCD codes, and all codes equivalent to them also exhibit low hull dimensions. It is interesting to investigate the upper bound of iso-dual MDS codes, or simply iso-dual codes from algebraic curves. We will focus on this issue in our future work.

\section{Acknowledgment}
This work is supported by	Guangdong Major Project of Basic and Applied Basic Research (No. 2019B030302008), the National Natural Science Foundation of China (No. 12441107) and Guangdong Provincial Key Laboratory of Information Security Technology(No. 2023B1212060026).

\end{document}